\newcommand*{\email}[1]{\href{mailto:#1}{\nolinkurl{#1}}}
\newtheorem{theorem}{Theorem}
\newtheorem{proposition}{Proposition}
\newtheorem{lemma}{Lemma}
\newtheorem{assumption}{Assumption}
\newtheorem{definition}{Definition}
\DeclareMathOperator{\p}{P}
\DeclareMathOperator{\argmin}{arg\,min}
\DeclareMathOperator{\gs}{gs}
\numberwithin{equation}{section}
\newcommand\keywords[1]{%
\begin{NoHyper}
\renewcommand\thefootnote{}\footnote{\emph{Keywords:} #1}%
\addtocounter{footnote}{-1}%
\end{NoHyper}
}
\title{Group selection and shrinkage: \\ Structured sparsity for semiparametric additive models}
\author{Ryan Thompson\thanks{Corresponding author. Now at School of Mathematics and Statistics, University of New South Wales and Data61, Commonwealth Scientific and Industrial Research Organisation. Email: \email{ryan.thompson1@unsw.edu.au}} }
\author{Farshid Vahid}
\affil{Department of Econometrics and Business Statistics, Monash University}
\begin{document}

\maketitle

\begin{abstract}
Sparse regression and classification estimators that respect group structures have application to an assortment of statistical and machine learning problems, from multitask learning to sparse additive modeling to hierarchical selection. This work introduces structured sparse estimators that combine group subset selection with shrinkage. To accommodate sophisticated structures, our estimators allow for arbitrary overlap between groups. We develop an optimization framework for fitting the nonconvex regularization surface and present finite-sample error bounds for estimation of the regression function. As an application requiring structure, we study sparse semiparametric additive modeling, a procedure that allows the effect of each predictor to be zero, linear, or nonlinear. For this task, the new estimators improve across several metrics on synthetic data compared to alternatives. Finally, we demonstrate their efficacy in modeling supermarket foot traffic and economic recessions using many predictors. These demonstrations suggest sparse semiparametric additive models, fit using the new estimators, are an excellent compromise between fully linear and fully nonparametric alternatives. All of our algorithms are made available in the scalable implementation \texttt{grpsel}.
\end{abstract}

\keywords{Group lasso, group sparsity, group subset selection, structured sparsity, variable selection}

\section{Introduction}
\label{sec:intro}

Sparsity over group structures arises in connection with a myriad of statistical and machine learning problems, e.g., multitask learning \parencite{Obozinski2006}, sparse additive modeling \parencite{Ravikumar2009}, and hierarchical selection \parencite{Lim2015}. Even sparse linear modeling can involve structured sparsity, such as when a categorical predictor is represented as a sequence of dummy variables. In certain domains, groups may emerge naturally, e.g., disaggregates of the same macroeconomic series or genes of the same biological path. The prevalence of such problems motivates principled estimation procedures capable of encoding structure into the fitted models they produce.

Given response $\mathbf{y}=(y_1,\ldots,y_n)^\top\in\mathbb{R}^n$, predictors $\mathbf{X}=(\mathbf{x}_1,\ldots,\mathbf{x}_n)^\top\in\mathbb{R}^{n\times p}$, and nonoverlapping groups $\mathcal{G}_1,\ldots,\mathcal{G}_g\subseteq\{1,\ldots,p\}$, group lasso \parencite{Yuan2006,Meier2008} solves
\begin{equation*}
\underset{\bm{\beta}\in\mathbb{R}^p}{\min}\,\sum_i\ell\left(\mathbf{x}_i^\top\bm{\beta},y_i\right)+\sum_k\lambda_k\|\bm{\beta}_k\|,
\end{equation*}
where $\ell:\mathbb{R}^2\to\mathbb{R}_+$ is a loss function (e.g., square loss for regression or logistic loss for classification), $\lambda_1,\ldots,\lambda_g$ are nonnegative tuning parameters, and $\bm{\beta}_k\in\mathbb{R}^{p_k}$ are the coefficients $\bm{\beta}$ indexed by $\mathcal{G}_k$.\footnote{Here and throughout, the intercept term is omitted to facilitate exposition.} Group lasso couples coefficients via their $l_2$-norm so that all predictors in a group are selected together.

Just as lasso \parencite{Tibshirani1996} is the continuous relaxation of the combinatorially-hard problem of best subset selection (``best subset''), so is group lasso the relaxation of the combinatorial problem of group subset selection (``group subset''):
\begin{equation*}
\underset{\bm{\beta}\in\mathbb{R}^p}{\min}\,\sum_i\ell\left(\mathbf{x}_i^\top\bm{\beta},y_i\right)+\sum_k\lambda_k1(\|\bm{\beta}_k\|\neq0).
\end{equation*}
Unlike group lasso, which promotes group sparsity implicitly by nondifferentiability of the $l_2$-norm at the null vector, group subset explicitly penalizes the number of nonzero groups. Consequently, one might interpret group lasso as a compromise made in the interest of computation. However, group lasso has a trick up its sleeve that group subset does not: shrinkage. Shrinkage estimators such as lasso are more robust than best subset to noise \parencite{Breiman1996,Hastie2020}. This consideration motivates one to shrink the group subset estimator:
\begin{equation}
\label{eq:shrinkgroupsubset}
\underset{\bm{\beta}\in\mathbb{R}^p}{\min}\,\sum_i\ell\left(\mathbf{x}_i^\top\bm{\beta},y_i\right)+\sum_k\lambda_{0k}1(\|\bm{\beta}_k\|\neq0)+\sum_k\lambda_{1k}\|\bm{\beta}_k\|.
\end{equation}
In contrast to group lasso and group subset, \eqref{eq:shrinkgroupsubset} directly controls both group sparsity and shrinkage via separate penalties---selection via group subset and shrinkage via group lasso. The combination of best subset and lasso in the unstructured setting results in good predictive models with low false positive selection rates \parencite{Mazumder2023}.

Unfortunately, the estimators~\eqref{eq:shrinkgroupsubset}, including group lasso and group subset as special cases, do not accommodate overlap among groups. Specifically, if two groups overlap, one cannot be selected independently of the other. To encode sophisticated structures, such as hierarchies or graphs, groups must often overlap. To address this issue, one can introduce group-specific vectors $\bar{\bm{\nu}}_k\in\mathbb{R}^p$ $(k=1,\ldots,g)$ that are zero everywhere except at the positions indexed by $\mathcal{G}_k$. Letting $\mathcal{V}$ be the set of all tuples $\bar{\bm{\nu}}:=(\bar{\bm{\nu}}_1,\ldots,\bar{\bm{\nu}}_g)$ with elements satisfying this property, group subset with shrinkage becomes
\begin{equation}
\label{eq:overshrinkgroupsubset}
\underset{\substack{\bm{\beta}\in\mathbb{R}^p,\,\bar{\bm{\nu}}\in\mathcal{V} \\ \bm{\beta}=\sum_{k}\bar{\bm{\nu}}_k}}{\min}\,\sum_i\ell\left(\mathbf{x}_i^\top\bm{\beta},y_i\right)+\sum_k\lambda_{0k}1(\|\bar{\bm{\nu}}_k\|\neq0)+\sum_k\lambda_{1k}\|\bar{\bm{\nu}}_k\|.
\end{equation}
The vectors $\bar{\bm{\nu}}_1,\ldots,\bar{\bm{\nu}}_g$ are a decomposition of $\bm{\beta}$ into a sum of latent coefficients that facilitate selection of overlapping groups. For instance, if three predictors, $x_1$, $x_2$, and $x_3$, are spread across two groups, $\mathcal{G}_1=\{1,2\}$ and $\mathcal{G}_2=\{2,3\}$, then $\beta_1=\bar{\nu}_{1,1}$, $\beta_2=\bar{\nu}_{2,1}+\bar{\nu}_{2,2}$, and $\beta_3=\bar{\nu}_{3,2}$. Since $\beta_2$ has a separate latent coefficient for each group, $\mathcal{G}_1$ or $\mathcal{G}_2$ can be selected independently of the other. This latent coefficient approach originated for group lasso with \textcite{Jacob2009,Obozinski2011}. When all groups are disjoint, \eqref{eq:overshrinkgroupsubset} reduces exactly to \eqref{eq:shrinkgroupsubset}.

This paper develops computational methods and statistical theory for group subset with and without shrinkage. Via the formulation~\eqref{eq:overshrinkgroupsubset}, our work accommodates the general overlapping groups setting. On the computational side, we develop algorithms that scale to compute quality (approximate) solutions of the combinatorial optimization problem. Our framework comprises coordinate descent and local search and applies to general smooth convex loss functions (i.e., regression and classification), building on recent advances for best subset \parencite{Hazimeh2020,Dedieu2021}. In contrast to existing computational methods for group subset \parencite{Guo2014,Bertsimas2016}, which rely on branch-and-bound or commercial mixed-integer optimizers, our methods scale to instances with millions of predictors or groups. We implement our framework in the publicly available \texttt{R} package \texttt{grpsel}. On the statistical side, we establish new error bounds for group subset with and without shrinkage. The bounds apply in the overlapping setting and allow for model misspecification. The analysis sheds light on the advantages of structured sparsity and the benefits of shrinkage.

The new estimators have application to a broad range of statistical problems. We focus on sparse semiparametric additive modeling \parencite{Chouldechova2015,Lou2016}, a procedure that models $y$ as a sum of univariate functions $f_j$:
\begin{equation*}
g(\operatorname{E}[y])=\sum_jf_j(x_j),
\end{equation*}
where $g$ is a given link function (e.g., identity for regression or logit for classification). The appeal of this model is its flexibility. An individual $f_j$ can be a complex nonlinear function, a simple linear function $f_j(x_j)=\beta_jx_j$, or a zero function $f_j(x_j)=0$, the latter implying no effect of $x_j$. \textcite{Chouldechova2015} estimate these models using group lasso with overlapping groups and regression splines. Briefly summarized, this approach takes $f_j(x_j)=\sum_{l=1}^m\beta_{jl}b_l(x_j)$, where $b_l(x_j)$ are the basis functions of an $m$-term regression spline in which $b_1(x_j)=x_j$. Two groups are created for each $x_j$: a linear group containing the linear term $b_1(x_j)$, and a nonlinear group containing all $b_1(x_j),\ldots,b_M(x_j)$. Group lasso is then applied to these overlapping groups. If the $j$th nonlinear group is selected, $f_j$ is fit as a nonlinear function, regardless of whether the $j$th linear group is also selected (if it is, the fit is still nonlinear and the two coefficients on $b_1(x_j)$ are added together). If only the linear group is selected, the fit is linear. If neither group is selected, $x_j$ is excluded from the model.

After conducting synthetic experiments on the efficacy of our estimators in fitting sparse semiparametric additive models (``semiparametric models'') using the above approach, we carry out two empirical studies. The first study involves modeling supermarket foot traffic using sales volumes on different products. Only a fraction of supermarket products are traded in volume, necessitating sparsity. The second study involves modeling recessionary periods in the economy using macroeconomic series. The macroeconomic literature contains many examples of sparse linear modeling \parencite{DeMol2008,Li2014}, yet theory does not dictate linearity. Together these studies suggest sparse semiparametric models are an excellent compromise between fully linear and fully nonparametric alternatives.

Independently and concurrently to this work, \textcite{Hazimeh2023} study computation and theory for group subset with nonoverlapping groups. Their algorithms likewise build on \textcite{Hazimeh2020} but apply only to square loss regression. Also related is \textcite{Zhang2023} who propose a computational ``splicing'' technique for group subset that appears promising, though they do not consider overlapping groups or shrinkage.

\subsection{Organization}

The paper is structured as follows. Section~\ref{sec:algorithms} presents computational methods. Section~\ref{sec:theory} provides statistical theory. Section~\ref{sec:implementation} discusses the \texttt{R} implementation. Section~\ref{sec:simulations} describes simulation experiments. Section~\ref{sec:data} reports data analyses. Section~\ref{sec:conclusion} closes the paper.

\section{Computation}
\label{sec:algorithms}

This section introduces our optimization framework and its key components: coordinate descent and local search. The framework applies to any smooth loss function $\ell(z,y)$ convex in $z$. The discussion below addresses the specific cases of square loss $\ell(z,y)=(y-z)^2/2$, which is suitable for regression, and logistic loss $\ell(z,y)=-y\log(z)-(1-y)\log(1-z)$, which is suitable for classification. Throughout this section, the predictor matrix $\mathbf{X}$ is assumed to have columns with mean zero and unit $l_2$-norm.

\subsection{Problem reformulation}
\label{sec:reformulation}

From a computational perspective, it helps to reformulate the group subset problem~\eqref{eq:overshrinkgroupsubset} as an unconstrained minimization problem involving only the latent coefficients $\bar{\bm{\nu}}$. For this task, we denote by $\bm{\nu}_k\in\mathbb{R}^{p_k}$ the restriction of $\bar{\bm{\nu}}_k$ to the coordinates indexed by group $k$. No information is lost in this restriction since all elements not indexed by $\mathcal{G}_k$ are zero. We also introduce the vector $\bm{\nu}:=(\bm{\nu}_1^\top,\ldots,\bm{\nu}_g^\top)^\top\in\mathbb{R}^{\sum_kp_k}$ formed by vertically concatenating the vectors $\bm{\nu}_1,\ldots,\bm{\nu}_g$. Consider now the unconstrained minimization problem
\begin{equation*}
\underset{\bm{\nu}\in\mathbb{R}^{\sum_kp_k}}{\min}\,F(\bm{\nu}):=L(\bm{\nu})+R(\bm{\nu}).
\end{equation*}
Here, the function $L(\bm{\nu})$ is the loss term:
\begin{equation*}
L(\bm{\nu}):=\sum_i\ell\left(\sum_k\mathbf{x}_{ik}^\top\bm{\nu}_k,y_i\right),
\end{equation*}
where $\mathbf{x}_{ik}$ is the $i$th row of the matrix $\mathbf{X}_k$, with $\mathbf{X}_k$ the restriction of $\mathbf{X}$ to the columns indexed by group $k$. The function $R(\bm{\nu})$ is the regularizer term:
\begin{equation*}
R(\bm{\nu}):=\sum_k\left(\lambda_{0k}1(\|\bm{\nu}_k\|\neq0)+\lambda_{1k}\|\bm{\nu}_k\|\right).
\end{equation*}
Observe that the loss $L(\bm{\nu})$ is exactly equivalent to that in \eqref{eq:overshrinkgroupsubset} since
\begin{equation*}
\sum_i\ell\left(\sum_k\mathbf{x}_{ik}^\top\bm{\nu}_k,y_i\right)=\sum_i\ell\left(\mathbf{x}_i^\top\sum_k\bar{\bm{\nu}}_k,y_i\right)=\sum_i\ell\left(\mathbf{x}_i^\top\bm{\beta},y_i\right).
\end{equation*}
The regularizer $R(\bm{\nu})$ is likewise equivalent because $\|\bm{\nu}_k\|=\|\bar{\bm{\nu}}_k\|$. As the equalities immediately above suggest, it is straightforward to recover $\bm{\beta}$ from $\bm{\nu}$.

We refer to the group support, or active set of groups, of the vector $\bm{\nu}$ as the set of nonzero group indices:
\begin{equation*}
\operatorname{gs}(\bm{\nu}):=\{k\in\{1,\ldots,g\}:\|\bm{\nu}_k\|\neq0\}.
\end{equation*}
The complement of the active set is referred to as the inactive set.

\subsection{Coordinate descent}

Coordinate descent algorithms are optimization routines that minimize along successive coordinate hyperplanes. The coordinate descent scheme developed here iteratively fixes all but one group of coordinates (a coordinate group) and minimizes in the directions of these coordinates.

The objective function $F(\bm{\nu})$ is a sum of smooth convex and discontinuous nonconvex functions and is hence discontinuous nonconvex. The minimization problem with respect to group $k$ is
\begin{equation}
\label{eq:cdopt}
\underset{\bm{\xi}\in\mathbb{R}^{p_k}}{\min}\,F(\bm{\nu}_1,\ldots,\bm{\nu}_{k-1},\bm{\xi},\bm{\nu}_{k+1},\ldots,\bm{\nu}_g).
\end{equation}
The complexity of this coordinate-wise minimization depends on the type of loss function and the properties of the group matrix $\mathbf{X}_k$. In the case of square loss, the minimization involves a least-squares fit in $p_k$ coordinates, taking $O(p_k^2n)$ operations. To bypass these involved computations, a partial minimization scheme is adopted whereby each coordinate group is updated using a single gradient descent step taken with respect to that group. This scheme results from a standard technique of replacing the objective function with a surrogate function that is an upper bound. To this end, we require Lemma~\ref{lemma:descent}.
\begin{lemma}
\label{lemma:descent}
Let $L:\mathbb{R}^{\sum_kp_k}\to\mathbb{R}$ be a continuously differentiable function. Suppose there exists a $c_k>0$ such that the gradient of $L$ with respect to the $k$th coordinate group satisfies the Lipschitz property
\begin{equation*}
\|\nabla_kL(\bm{\nu})-\nabla_kL(\tilde{\bm{\nu}})\|\leq c_k\|\bm{\nu}_k-\tilde{\bm{\nu}}_k\|,
\end{equation*}
for all $\bm{\nu}\in\mathbb{R}^{\sum_kp_k}$ and $\tilde{\bm{\nu}}\in\mathbb{R}^{\sum_kp_k}$ that differ only in group $k$. Then it holds
\begin{equation}
\label{eq:upper}
L(\bm{\nu})\leq\bar{L}_{\bar{c}_k}(\bm{\nu};\tilde{\bm{\nu}}):=L(\tilde{\bm{\nu}})+\nabla_kL(\tilde{\bm{\nu}})^\top(\bm{\nu}_k-\tilde{\bm{\nu}}_k)+\frac{\bar{c}_k}{2}\|\bm{\nu}_k-\tilde{\bm{\nu}}_k\|^2,
\end{equation}
for any $\bar{c}_k\geq c_k$.
\end{lemma}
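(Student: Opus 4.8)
The plan is to reduce the multivariate claim to a one-dimensional argument along the segment joining $\tilde{\bm{\beta}}$ to $\bm{\beta}$, and then combine the fundamental theorem of calculus with the group-$k$ Lipschitz hypothesis. Throughout I take $\bm{\beta}$ and $\tilde{\bm{\beta}}$ to differ only in group $k$, which is the only case in which the right-hand side $\bar{L}_{\bar{c}_k}(\bm{\beta};\tilde{\bm{\beta}})$---depending solely on $\bm{\beta}_k$ and $\tilde{\bm{\beta}}_k$---can serve as an upper bound. Write $\bm{d}:=\bm{\beta}-\tilde{\bm{\beta}}$ and note that $\bm{d}$ is supported on group $k$, so its group-$k$ block equals $\bm{\beta}_k-\tilde{\bm{\beta}}_k$ and it vanishes on all other coordinates.

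First I would define $\phi:[0,1]\to\mathbb{R}$ by $\phi(t):=L(\tilde{\bm{\beta}}+t\bm{d})$. Since $L$ is continuously differentiable, $\phi$ is continuously differentiable with $\phi'(t)=\nabla L(\tilde{\bm{\beta}}+t\bm{d})^\top\bm{d}$, and because $\bm{d}$ is supported on group $k$ only the group-$k$ block of the gradient contributes, giving $\phi'(t)=\nabla_kL(\tilde{\bm{\beta}}+t\bm{d})^\top(\bm{\beta}_k-\tilde{\bm{\beta}}_k)$. The fundamental theorem of calculus then yields $L(\bm{\beta})-L(\tilde{\bm{\beta}})=\int_0^1\phi'(t)\,dt$. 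Subtracting the linear term by adding and subtracting $\nabla_kL(\tilde{\bm{\beta}})^\top(\bm{\beta}_k-\tilde{\bm{\beta}}_k)$ inside the integral produces the identity
\[
L(\bm{\beta})-L(\tilde{\bm{\beta}})-\nabla_kL(\tilde{\bm{\beta}})^\top(\bm{\beta}_k-\tilde{\bm{\beta}}_k)=\int_0^1\left[\nabla_kL(\tilde{\bm{\beta}}+t\bm{d})-\nabla_kL(\tilde{\bm{\beta}})\right]^\top(\bm{\beta}_k-\tilde{\bm{\beta}}_k)\,dt.
\]

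Next I would bound the integrand. Cauchy--Schwarz gives a factor $\|\bm{\beta}_k-\tilde{\bm{\beta}}_k\|_2$, and the Lipschitz hypothesis---applicable because $\tilde{\bm{\beta}}+t\bm{d}$ and $\tilde{\bm{\beta}}$ differ only in group $k$---yields $\|\nabla_kL(\tilde{\bm{\beta}}+t\bm{d})-\nabla_kL(\tilde{\bm{\beta}})\|_2\le c_k\|t\bm{d}_k\|_2=c_kt\|\bm{\beta}_k-\tilde{\bm{\beta}}_k\|_2$. Substituting and integrating $c_kt$ over $[0,1]$ contributes the factor $c_k/2$, so the right-hand side is at most $\tfrac{c_k}{2}\|\bm{\beta}_k-\tilde{\bm{\beta}}_k\|_2^2\le\tfrac{\bar{c}_k}{2}\|\bm{\beta}_k-\tilde{\bm{\beta}}_k\|_2^2$ for any $\bar{c}_k\ge c_k$, which is exactly \eqref{eq:upper}.

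The only point that requires care---rather than a genuine obstacle---is checking that the entire segment $\{\tilde{\bm{\beta}}+t\bm{d}:t\in[0,1]\}$ lies in the set on which the Lipschitz property is assumed, i.e.\ that every point on it differs from $\tilde{\bm{\beta}}$ only in group $k$. This is immediate from $\bm{d}$ being supported on group $k$, and it is precisely what licenses using the group-restricted constant $c_k$ in place of a global Lipschitz constant. Everything else is the standard descent-lemma computation.
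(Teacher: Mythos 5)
Your proof is correct and is exactly the standard argument behind the block descent lemma of Beck and Tetruashvili, which the paper simply cites rather than reproves: restrict to the segment $\tilde{\bm{\beta}}+t\bm{d}$ supported on group $k$, apply the fundamental theorem of calculus, Cauchy--Schwarz, and the group-wise Lipschitz bound, and integrate $c_kt$ to get the factor $c_k/2\leq\bar{c}_k/2$. Your observation that the bound is only claimed (and only meaningful) for $\bm{\beta},\tilde{\bm{\beta}}$ differing in group $k$ alone, which is what licenses the group-restricted constant, is the right reading of the statement.
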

Lemma~\ref{lemma:descent} is the block descent lemma of \textcite{Beck2013a}, which holds under a Lipschitz condition on the coordinate-wise gradients of $L(\bm{\nu})$. This condition is satisfied for square loss with $c_k=\gamma_k^2$ and for logistic loss with $c_k=\gamma_k^2/4$, where $\gamma_k$ is the maximal eigenvalue of $\mathbf{X}_k^\top\mathbf{X}_k$. Using the result of Lemma~\ref{lemma:descent}, an upper bound of $F(\bm{\nu})$, treated as a function in group $k$, is given by
\begin{equation}
\label{eq:upperreg}
\bar{F}_{\bar{c}_k}(\bm{\nu};\tilde{\bm{\nu}}):=\bar{L}_{\bar{c}_k}(\bm{\nu};\tilde{\bm{\nu}})+R(\bm{\nu}).
\end{equation}
Thus, in place of the minimization~\eqref{eq:cdopt}, we use the minimization
\begin{equation}
\label{eq:cdoptupper}
\underset{\bm{\xi}\in\mathbb{R}^{p_k}}{\min}\,\bar{F}_{\bar{c}_k}(\bm{\nu}_1,\ldots,\bm{\nu}_{k-1},\bm{\xi},\bm{\nu}_{k+1},\ldots,\bm{\nu}_g;\tilde{\bm{\nu}}).
\end{equation}
This new problem admits a simple analytical solution, given by Proposition~\ref{prop:cdupdate}.
\begin{proposition}
\label{prop:cdupdate}
Define the thresholding function
\begin{equation}
\label{eq:threshold}
T_c(\bm{\xi};\lambda_0,\lambda_1):=
\begin{dcases*}
\left(1-\frac{\lambda_1}{c\|\bm{\xi}\|}\right)_+\bm{\xi} & if $\left(1-\frac{\lambda_1}{c\|\bm{\xi}\|}\right)_+\|\bm{\xi}\|\geq\sqrt{\dfrac{2\lambda_0}{c}}$ \\
\bm{0} & otherwise,
\end{dcases*}
\end{equation}
where $(x)_+$ is shorthand for $\max(x,0)$. Then the coordinate-wise minimization problem~\eqref{eq:cdoptupper} is solved by
\begin{equation*}
\hat{\bm{\nu}}_k=T_{\bar{c}_k}\left(\tilde{\bm{\nu}}_k-\frac{1}{\bar{c}_k}\nabla_k L(\tilde{\bm{\nu}});\lambda_{0k},\lambda_{1k}\right).
\end{equation*}
\end{proposition}
A proof of Proposition~\ref{prop:cdupdate} is available in Appendix~\ref{app:propcdupdate}. The proposition states that a minimizer is given by appropriately thresholding a gradient descent update to coordinate group $k$. For both square and logistic loss, the gradient $\nabla_kL(\tilde{\bm{\nu}})$ can be expressed as
\begin{equation*}
\nabla_kL(\tilde{\bm{\nu}})=-\mathbf{X}_k^\top\mathbf{r},
\end{equation*}
where $\mathbf{r}=\mathbf{y}-\sum_j\mathbf{X}_j\tilde{\bm{\nu}}_j$ for square loss and $\mathbf{r}=\mathbf{y}-(1+\exp(-\sum_j\mathbf{X}_j\tilde{\bm{\nu}}_j))^{-1}$ for logistic loss. Hence, a solution to \eqref{eq:cdoptupper} can be computed in as few as $O(p_kn)$ operations.

Algorithm~\ref{alg:cd} now presents the coordinate descent scheme.
\begin{algorithm}[ht]
\caption{Coordinate descent}
\label{alg:cd}
\DontPrintSemicolon
\SetKw{Break}{break}
\SetKwInOut{Input}{input}
\Input{$\bm{\nu}^{(0)}\in\mathbb{R}^{\sum_kp_k}$}
\For{$m=1,2,\ldots$}{
$\bm{\nu}^{(m)}\gets\bm{\nu}^{(m-1)}$ \;
\For{$k=1,\ldots,g$}{
$\bm{\nu}_k^{(m)}\gets\underset{\bm{\xi}\in\mathbb{R}^{p_k}}{\argmin}~\bar{F}_{\bar{c}_k}(\bm{\nu}_1^{(m)},\ldots,\bm{\nu}_{k-1}^{(m)},\bm{\xi},\bm{\nu}_{k+1}^{(m)},\ldots,\bm{\nu}_g^{(m)};\bm{\nu}^{(m)})$ \;
}
\lIf{converged}{\Break}
}
\Return{$\bm{\nu}^{(m)}$}
\end{algorithm}
Several algorithmic optimizations and heuristics can improve performance, including gradient screening, gradient ordering, and active set updates. These are discussed in Appendix~\ref{app:tricks}.

While Algorithm~\ref{alg:cd} may appear as an otherwise standard coordinate descent algorithm, the presence of the group subset penalty complicates the analysis of its convergence properties. No standard convergence results directly apply. Hence, we work towards establishing a convergence result tailored to Algorithm~\ref{alg:cd}. For this task, we require the notion of a stationary point. Recall the lower directional derivative of a function $g:\mathbb{R}^p\to\mathbb{R}$ at a point $\mathbf{x}\in\mathbb{R}^p$ along the direction $\mathbf{v}\in\mathbb{R}^p$ is given by $\nabla_{\mathbf{v}}^-g(\mathbf{x}):=\liminf_{h\downarrow0}h^{-1}[g(\mathbf{x}+\mathbf{v}h)-g(\mathbf{x})]$. The definition for a stationary point now follows.
\begin{definition}
\label{def:statpoint}
A point $\bm{\nu}^\star\in\mathbb{R}^{\sum_kp_k}$ is said to be a stationary point of $F(\bm{\nu})$ if the lower directional derivative $\nabla_\mathbf{v}^-F(\bm{\nu}^\star)$ is nonnegative along all vectors $\bm{\mathbf{v}}\in\mathbb{R}^{\sum_kp_k}$.
\end{definition}
Definition~\ref{def:statpoint} is a standard notion of stationarity in the context of nondifferentiable functions \parencite{Tseng2001,Hazimeh2020}. With this basic definition at hand, we now introduce the stronger notion of a coordinate descent minimum point.
\begin{definition}
\label{def:cdmin}
A point $\bm{\nu}^\star\in\mathbb{R}^{\sum_kp_k}$ with active set $\mathcal{A}=\gs(\bm{\nu}^\star)$ is said to be a coordinate descent minimum point of $F(\bm{\nu})$ if it is a stationary point and
\begin{itemize}
\item For all $k\in\mathcal{A}$, it holds $\|\bm{\nu}_k^\star\|\geq\sqrt{2\lambda_{0k}/\bar{c}_k}$; and
\item For all $k\not\in\mathcal{A}$, it holds $\left(\|\nabla_kL(\bm{\nu}^\star)\|-\lambda_{1k}\right)/\bar{c}_k\leq\sqrt{2\lambda_{0k}/\bar{c}_k}$.
\end{itemize}
\end{definition}
The inequalities in Definition~\ref{def:cdmin}, which place additional restrictions on the norms of the group coefficients or gradients, follow from the thresholding function \eqref{eq:threshold}. All in all, a coordinate descent minimum point cannot be improved by partially minimizing in the directions of any coordinate group. Theorem~\ref{thrm:cdconverge} now establishes convergence of Algorithm~\ref{alg:cd} to such a point.
\begin{theorem}
\label{thrm:cdconverge}
Let $\bar{c}_k>c_k$ for all $k=1,\ldots,g$. Then the sequence of iterates $\{\bm{\nu}^{(m)}\}_{m\in\mathbb{N}}$ produced by Algorithm~\ref{alg:cd} converges to a fixed support $\mathcal{A}$ in finitely many iterations. Furthermore, if $\lambda_{1k}>0$ for all $k=1,\ldots,g$, or no elements of $\bm{\nu}$ tend to $\pm\infty$, then $\{\bm{\nu}^{(m)}\}_{m\in\mathbb{N}}$ converges to a coordinate descent minimum point $\bm{\nu}^\star$ of $F(\bm{\nu})$ with $\gs(\bm{\nu}^\star)=\mathcal{A}$.
\end{theorem}
To prove Theorem~\ref{thrm:cdconverge}, we follow a strategy similar to \textcite{Dedieu2021} and establish that the algorithm yields a sufficient decrease to the objective value and, from this property, show that the active set of the iterates eventually stabilizes. We can then treat the group subset penalty as a fixed quantity after some finite number of iterations and, in turn, call on existing convergence results. The full proof is in Appendix~\ref{app:thrmcdconverge}.

\subsection{Local search}

Local search algorithms have a long history in combinatorial optimization. Here we present one tailored specifically to the group subset problem. The proposed scheme generalizes an algorithm that first appeared in \textcite{Beck2013} for solving instances of unstructured sparse optimization. \textcite{Hazimeh2020,Dedieu2021} adapt it to best subset with promising results. The core idea is simple: given an incumbent solution, search a neighborhood local to that solution for a minimizer with lower objective value by discretely optimizing over a small set of coordinate groups. This scheme turns out to be useful when the predictors are strongly correlated, a situation in which coordinate descent alone may produce a poor solution.

Define the constraints sets
\begin{equation*}
C_s^1(\bm{\nu}):=\left\{\mathbf{z}\in\{0,1\}^{\sum_kp_k}:\operatorname{gs}(\mathbf{z})\subseteq\operatorname{gs}(\bm{\nu}),\sum_k1(\|\mathbf{z}_k\|\neq0)\leq s\right\}
\end{equation*}
and
\begin{equation*}
C_s^2(\bm{\nu}):=\left\{\mathbf{z}\in\{0,1\}^{\sum_kp_k}:\operatorname{gs}(\mathbf{z})\not\subseteq\operatorname{gs}(\bm{\nu}),\sum_k1(\|\mathbf{z}_k\|\neq0)\leq s\right\}.
\end{equation*}
Now, consider the optimization problem
\begin{equation}
\label{eq:ls}
\underset{\substack{\bm{\xi}\in\mathbb{R}^{\sum_kp_k} \\ \mathbf{z}^1\in C_s^1(\bm{\nu}),\,\mathbf{z}^2\in C_s^2(\bm{\nu})}}{\min}\,F(\bm{\nu}-\mathbf{z}^1\circ\bm{\nu}+\mathbf{z}^2\circ\bm{\xi}),
\end{equation}
where $\circ$ notates element-wise multiplication. Given a fixed vector $\bm{\nu}$, produced by Algorithm~\ref{alg:cd} say, a solution to \eqref{eq:ls} is a minimizer among all ways of replacing an subset of active coordinate groups in $\bm{\nu}$ with a previously inactive subset. The complexity of the problem is dictated by $s$, which controls the maximal size of these subsets. The limiting case $s=1$ admits an efficient computational scheme, described in Appendix~\ref{app:ls}. For other values of $s$, off-the-shelf mixed-integer optimizers are available (e.g., \texttt{CPLEX}, \texttt{Gurobi}, or \texttt{MOSEK}). We refer the interested reader to the discussion in \textcite{Hazimeh2020} for further details.

Algorithm~\ref{alg:ls} now presents the local search scheme.
\begin{algorithm}[ht]
\caption{Local search}
\label{alg:ls}
\DontPrintSemicolon
\SetKw{Break}{break}
\SetKwInOut{Input}{input}
\Input{$\hat{\bm{\nu}}^{(0)}\in\mathbb{R}^{\sum_kp_k}$}
\For{$m=1,2,\ldots$}{
$\bm{\nu}^{(m)}\gets$ result of running Algorithm~\ref{alg:cd} initialized with $\hat{\bm{\nu}}^{(m-1)}$ \;
$\hat{\bm{\nu}}^{(m)}\gets$ result of solving local search problem \eqref{eq:ls} with $\bm{\nu}=\bm{\nu}^{(m)}$ \;
\lIf{$F(\hat{\bm{\nu}}^{(m)})\not<F(\bm{\nu}^{(m)})$}{\Break}
}
\Return{$\bm{\nu}^{(m)}$}
\end{algorithm}
To summarize, the algorithm first produces a candidate solution using coordinate descent and then follows up by solving the local search problem \eqref{eq:ls}. This scheme is iterated until the solution cannot be improved. In the experiments to come, we solve the local search problem with subset sizes $s=1$  at each iteration of the loop. Compared with coordinate descent alone, this scheme can yield significantly lower objective values in high-correlation scenarios.

To characterize the convergence of Algorithm~\ref{alg:ls} and the quality of its iterates, we introduce the notion of a local search minimum point.
\begin{definition}
\label{def:lsmin}
A point $\bm{\nu}^\star\in\mathbb{R}^{\sum_kp_k}$ is said to be a local search minimum point of $F(\bm{\nu})$ if it is a coordinate descent minimum point and satisfies the inequality
\begin{equation*}
F(\bm{\nu}^\star)\leq\underset{\substack{\bm{\xi}\in\mathbb{R}^{\sum_kp_k} \\ \mathbf{z}^1\in C_s^1(\bm{\nu}^\star),\,\mathbf{z}^2\in C_s^2(\bm{\nu}^\star)}}{\min}\,F(\bm{\nu}^\star-\mathbf{z}^1\circ\bm{\nu}^\star+\mathbf{z}^2\circ\bm{\xi}).
\end{equation*}
\end{definition}
Definition~\ref{def:lsmin} states that local search minimum points are coordinate descent minimum points that are unable to be improved by replacing any set of $s$ coordinate groups. Theorem~\ref{thrm:lsconverge} now states that Algorithm~\ref{alg:ls} converges to such a point.
\begin{theorem}
\label{thrm:lsconverge}
Suppose the conditions of Theorem~\ref{thrm:cdconverge} hold. Then the sequence of iterates $\{\bm{\nu}^{(m)}\}_{m\in\mathbb{N}}$ produced by Algorithm~\ref{alg:ls} converges a local search minimum point $\bm{\nu}^\star$ of $F(\bm{\nu})$ in finitely many iterations.
\end{theorem}
Theorem~\ref{thrm:lsconverge} indicates that local search can lead to higher-quality solutions than coordinate descent due to the additional optimality condition imposed. A proof of the theorem is available in Appendix~\ref{app:thrmlsconverge}.

\subsection{Regularization sequence}

To ensure larger groups are not unfairly penalized more strongly than smaller groups, the parameters $\lambda_{0k}$ and $\lambda_{1k}$ are configured to reflect the group size $p_k$. Suitable default choices are $\lambda_{0k}=p_k\lambda_0$ and $\lambda_{1k}=\sqrt{p}_k\lambda_1$, where $\lambda_0$ and $\lambda_1$ are nonnegative. For fixed $\lambda_1$, we take a sequence $\{\lambda_0^{(t)}\}_{t=1}^{T}$ such that $\lambda_0^{(1)}$ yields $\hat{\bm{\nu}}=\mathbf{0}$, and sequentially warm start the algorithms. That is, the solution for $\lambda_0^{(t+1)}$ is obtained by using the solution from $\lambda_0^{(t)}$ as an initialization point. The sequence $\{\lambda_0^{(t)}\}_{t=1}^{T}$ is computed in such a way that the active set of groups corresponding to $\lambda_0^{(t+1)}$ is always different to that corresponding to $\lambda_0^{(t)}$. Appendix~\ref{app:lambdaseq} presents the details of this method.

\subsection{Explicit sparsity constraints}

A drawback to regularizing the number of selected groups by penalty rather than by constraint is that some sparsity levels may not be achieved along the sequence $\{\lambda_0^{(t)}\}_{t=1}^T$, i.e., there may be no value of $\lambda_0$ such that the sparsity level is some integer $s$. This gap between penalty and constraint arises due to the group subset regularizer being nonconvex. Unfortunately, when the regularizer is expressed as a constraint, the problem is not coordinate-wise separable and not amenable to coordinate descent. To this end, Appendix~\ref{app:proximal} describes an alternative proximal gradient descent approach that handles the regularizer in constraint form.

\section{Error bounds}
\label{sec:theory}

This section presents a finite-sample analysis of the proposed estimators. In particular, we state probabilistic upper bounds for the error of estimating the underlying regression function. These bounds accommodate overlapping groups and model misspecification. The role of structure and shrinkage is discussed, and comparisons are made with known bounds for other estimators.

\subsection{Setup}

The data is assumed to be generated according to the regression model
\begin{equation*}
y_i=f^0(\mathbf{x}_i)+\varepsilon_i,\quad i=1,\ldots,n,
\end{equation*}
where $f^0:\mathbb{R}^p\to\mathbb{R}$ is a regression function, $\mathbf{x}_i\in\mathbb{R}^p$ are fixed predictors, and $\varepsilon_i\sim\mathcal{N}(0,\sigma^2)$ is iid stochastic noise. This flexible specification encompasses the semiparametric model $f^0(\mathbf{x})=\sum_jf_j(x_j)$ (with $f_j$ zero, linear, or nonlinear) that is the focus of our empirical studies, and the linear model $f^0(\mathbf{x})=\mathbf{x}^\top\bm{\beta}^0$. Let $\mathbf{f}^0:=(f^0(\mathbf{x}_1),\ldots,f^0(\mathbf{x}_n))^\top$ be the vector of function evaluations at the sample points. The goal of this section is to place probabilistic upper bounds on $\|\mathbf{f}^0-\hat{\mathbf{f}}\|^2/n$, the estimation error of $\hat{\mathbf{f}}:=\mathbf{X}\hat{\bm{\beta}}$.

The objects of our analysis are the group subset estimators~\eqref{eq:overshrinkgroupsubset}. We allow the predictor groups $\mathcal{G}_1,\ldots,\mathcal{G}_g$ to overlap. To facilitate comparisons against existing results, we constrain the number of nonzero groups rather than penalize them. To this end, let $\mathcal{V}(s)$ be the set of all $\bar{\bm{\nu}}$ such that at most $s$ groups are nonzero:\footnote{For all values of the group subset penalty parameter $\lambda_0$, there exists a constraint parameter $s$ which yields an identical solution.}
\begin{equation*}
\mathcal{V}(s):=\left\{\bar{\bm{\nu}}\in\mathcal{V}:\sum_k1(\|\bar{\bm{\nu}}_k\|\neq0)\leq s\right\}.
\end{equation*}
We consider the regular group subset estimator:
\begin{equation}
\label{eq:groupsubsetcon}
\underset{\substack{\bm{\beta}\in\mathbb{R}^p,\,\bar{\bm{\nu}}\in\mathcal{V}(s) \\ \bm{\beta}=\sum_k\bar{\bm{\nu}}_k}}{\min}\,\frac{1}{n}\|\mathbf{y}-\mathbf{X}\bm{\beta}\|^2,
\end{equation}
and the shrinkage estimator:
\begin{equation}
\label{eq:shrinkgroupsubsetcon}
\underset{\substack{\bm{\beta}\in\mathbb{R}^p,\,\bar{\bm{\nu}}\in\mathcal{V}(s) \\ \bm{\beta}=\sum_k\bar{\bm{\nu}}_k}}{\min}\,\frac{1}{n}\|\mathbf{y}-\mathbf{X}\bm{\beta}\|^2+2\sum_k\lambda_k\|\bar{\bm{\nu}}_k\|.
\end{equation}
The results derived below apply to global minimizers of these nonconvex problems. The algorithms of the preceding section cannot guarantee such minimizers in general.\footnote{In recent work, \textcite{Guo2021} show that statistical properties of best subset remain valid when the attained minimum is within a certain neighborhood of the global minimum. We expect their analysis extends to structured settings.} If global optimality is of foremost concern, the output of our algorithms can be used to initialize a mixed-integer optimizer which can guarantee a global solution at additional computational expense.

\subsection{Bound for group subset selection}

We begin with Theorem~\ref{thrm:groupsubsetbound}, which characterizes an upper bound for group subset with no shrinkage. The notation $p_\mathrm{max}:=\max_kp_k$ represents the maximal group size. As is customary, we absorb numerical constants into the term $C>0$.
\begin{theorem}
\label{thrm:groupsubsetbound}
Let $\delta\in(0,1]$ and $\alpha\in(0,1)$. Then, for some numerical constant $C>0$, the group subset estimator~\eqref{eq:groupsubsetcon} satisfies
\begin{equation}
\label{eq:groupsubsetbound}
\frac{1}{n}\|\mathbf{f}^0-\hat{\mathbf{f}}\|^2\leq\underset{\substack{\bm{\beta}\in\mathbb{R}^p,\,\bar{\bm{\nu}}\in\mathcal{V}(s) \\ \bm{\beta}=\sum_k\bar{\bm{\nu}}_k}}{\min}\,\frac{1+\alpha}{(1-\alpha)n}\|\mathbf{f}^0-\mathbf{X}\bm{\beta}\|^2+\frac{C\sigma^2}{\alpha(1-\alpha)n}\left[sp_\mathrm{max}+s\log\left(\frac{g}{s}\right)+\log(\delta^{-1})\right]
\end{equation}
with probability at least $1-\delta$.
\end{theorem}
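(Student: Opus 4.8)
The plan is to follow the classical ``basic inequality'' route for constrained least-squares estimators, adapting the empirical-process control to the union-of-subspaces structure induced by group sparsity with overlap. Write $\bm{\varepsilon}=(\varepsilon_1,\ldots,\varepsilon_n)^\top$, fix any feasible competitor $\bm{\beta}$ (that is, $\bm{\beta}=\sum_{k=1}^g\bm{\nu}^{(k)}$ with $\bar{\bm{\nu}}\in\mathcal{V}(s)$), and exploit that $\hat{\bm{\beta}}$ is a global minimizer of $\tfrac1n\|\mathbf{y}-\mathbf{X}\bm{\beta}\|_2^2$ over this feasible set. Substituting $\mathbf{y}=\mathbf{f}^0+\bm{\varepsilon}$ into $\|\mathbf{y}-\hat{\mathbf{f}}\|_2^2\le\|\mathbf{y}-\mathbf{X}\bm{\beta}\|_2^2$ and cancelling the common $\|\bm{\varepsilon}\|_2^2$ yields the basic inequality
\[
\|\mathbf{f}^0-\hat{\mathbf{f}}\|_2^2\le\|\mathbf{f}^0-\mathbf{X}\bm{\beta}\|_2^2+2\bm{\varepsilon}^\top(\hat{\mathbf{f}}-\mathbf{X}\bm{\beta}),
\]
so everything reduces to controlling the stochastic cross term.

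The central observation is that both $\hat{\mathbf{f}}=\mathbf{X}\hat{\bm{\beta}}$ and $\mathbf{X}\bm{\beta}$ are fits built from at most $s$ active groups each, so their difference lives in the span of the columns of $\mathbf{X}$ indexed by a union of at most $2s$ groups. For a \emph{fixed} index set $S$ of at most $2s$ groups, let $V_S$ be this column span and $P_S$ the orthogonal projection onto it; then $\bm{\varepsilon}^\top(\hat{\mathbf{f}}-\mathbf{X}\bm{\beta})\le\|\hat{\mathbf{f}}-\mathbf{X}\bm{\beta}\|_2\,\|P_S\bm{\varepsilon}\|_2$. Here $\dim V_S\le 2sp_{\max}$ because the union of at most $2s$ groups contains at most $2sp_{\max}$ indices, with overlap only reducing this count; this is exactly where the $sp_{\max}$ term (the factor $2$ being absorbed into $C$) originates, and it is also why rank deficiency under overlap causes no harm. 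I would then bound $\|P_S\bm{\varepsilon}\|_2$ by a Gaussian/chi-squared concentration inequality, e.g.\ $\|P_S\bm{\varepsilon}\|_2\le\sigma(\sqrt{\dim V_S}+\sqrt{2t})$ with probability at least $1-e^{-t}$. Since the active set of $\hat{\bm{\beta}}$ is data-dependent, $S$ cannot be fixed in advance, so I would take a union bound over all $\binom{g}{\le 2s}$ choices of $S$. Using $\log\binom{g}{\le 2s}\le Cs\log(g/s)$ and calibrating the failure probability to $\delta$ produces, on an event of probability at least $1-\delta$, a uniform bound $\|P_S\bm{\varepsilon}\|_2^2\le C\sigma^2[sp_{\max}+s\log(g/s)+\log(\delta^{-1})]=:M^2$.

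Finally I would assemble the pieces. Writing $A:=\|\mathbf{f}^0-\hat{\mathbf{f}}\|_2$ and $B:=\|\mathbf{f}^0-\mathbf{X}\bm{\beta}\|_2$, the triangle inequality gives $\|\hat{\mathbf{f}}-\mathbf{X}\bm{\beta}\|_2\le A+B$, so the basic inequality reads $A^2\le B^2+2(A+B)M$. Applying Young's inequality in the form $2AM\le\alpha A^2+M^2/\alpha$ and $2BM\le\alpha B^2+M^2/\alpha$, collecting the $A^2$ terms, and dividing by $1-\alpha$ yields $A^2\le\frac{1+\alpha}{1-\alpha}B^2+\frac{2M^2}{\alpha(1-\alpha)}$; dividing by $n$, substituting the bound for $M^2$, and minimizing over feasible $\bm{\beta}$ recovers \eqref{eq:groupsubsetbound}. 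I expect the main obstacle to be the empirical-process step: one must control the supremum of the Gaussian projections uniformly over the combinatorial family of active sets while charging only $sp_{\max}$ dimensions to the variance term. Verifying that the latent-decomposition constraint $\bar{\bm{\nu}}\in\mathcal{V}(s)$ translates exactly into a span over a union of at most $s$ groups — hence a subspace of dimension at most $sp_{\max}$ even with overlap — is the delicate point that makes the overlapping case go through unchanged.
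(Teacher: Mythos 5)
Your proposal is correct and follows the same overall architecture as the paper's proof: the basic inequality from optimality of $\hat{\bm{\beta}}$ over the feasible set, the observation that $\hat{\bm{\beta}}-\bm{\beta}$ is supported on at most $2s$ latent groups, a uniform high-probability bound on the normalized cross term over all such directions, then Minkowski's inequality and two applications of Young's inequality with parameter $\alpha$ to absorb $\|\mathbf{f}^0-\hat{\mathbf{f}}\|_2$ into the left-hand side and produce the $\tfrac{1+\alpha}{1-\alpha}$ factor. The one place you genuinely diverge is the concentration step (the paper's Lemma \ref{lemma:subsettailbound}). The paper takes the SVD of each active submatrix $\mathbf{X}_\mathcal{A}$, reduces to $\max_{\mathcal{A}}\sup_{\mathbf{u}\in\mathcal{B}_2}|\bm{\varepsilon}^\top\mathbf{U}_\mathcal{A}\mathbf{u}|$, and controls the supremum by a $1/2$-net of the unit ball combined with a scalar Gaussian tail bound and a union bound over net points; you instead note that this supremum is exactly $\|P_S\bm{\varepsilon}\|_2=\|\mathbf{U}_\mathcal{A}^\top\bm{\varepsilon}\|_2$, whose square is $\sigma^2\chi^2(\dim V_S)$ with $\dim V_S\le 2sp_\mathrm{max}$, and apply chi-squared concentration directly before the union bound over the $\binom{g}{\le 2s}$ supports. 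Both routes yield the same $sp_\mathrm{max}+s\log(g/s)+\log(\delta^{-1})$ rate; yours is marginally cleaner (no net, no extra $\log 6$ per dimension, and it is the same device the paper itself uses in Lemma \ref{lemma:lassotailbound} for the slow rate), while the $\epsilon$-net argument is more robust to settings where the supremum over the unit ball is not available in closed form. Your handling of overlap --- that a union of at most $2s$ groups spans a subspace of dimension at most $2sp_\mathrm{max}$, with rank deficiency only helping --- matches the paper's implicit use of $|\mathcal{S}_\mathcal{A}|\le 2sp_\mathrm{max}$ and is the right justification.
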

Appendix~\ref{app:thrmgroupsubsetbound} provides a proof of Theorem~\ref{thrm:groupsubsetbound}. The first term on the right-hand side of \eqref{eq:groupsubsetbound} is the error incurred by the oracle in approximating $\mathbf{f}^0$ as $\mathbf{X}\bm{\beta}$. In general, this error is unavoidable in finite-dimensional settings. The three terms inside the brackets have the following interpretations. The first term is the cost of estimating $\bm{\beta}$; with $s$ active groups, there are at most $s\times p_\mathrm{max}$ parameters to estimate. The second term is the price of selection; it follows from an upper bound on the total number of group subsets. The third term controls the trade-off between the tightness of the bound and the probability it is satisfied. Finally, the scalar $\alpha$ appears in the bound due to the proof technique \parencite[as in, e.g.,][]{Rigollet2015}. When $\mathbf{f}^0=\mathbf{X}\bm{\beta}^0$, $\alpha$ need not appear. \textcite{Hazimeh2023} obtain a similar bound for $\mathbf{f}^0=\mathbf{X}\bm{\beta}^0$ in the case of equisized nonoverlapping groups. In the special case that all groups are singletons, \eqref{eq:groupsubsetbound} matches the well-known bound for best subset \parencite{Raskutti2011}.

Theorem~\ref{thrm:groupsubsetbound} confirms that group subset is preferable to best subset in structured settings. Consider the following example. Suppose we have $g$ groups each of size $p_0$ so that the total number of predictors is $p=g\times p_0$. It follows for group sparsity level $s$ that the ungrouped selection problem involves choosing $s\times p_0$ predictors. Accordingly, the ungrouped bound scales as $sp_0+sp_0\log(p/(sp_0))=sp_0+sp_0\log(g/s)$. On the other hand, the grouped bound scales as $sp_0+s\log(g/s)$, i.e., it improves by a factor $p_0$ of the logarithm term.

\subsection{Bounds for group subset selection with shrinkage}

We now establish bounds for group subset with shrinkage. The results are analogous to those established in \textcite{Mazumder2023} for best subset with shrinkage. Two results are given, a bound where the error decays as $1/\sqrt{n}$, and another where the error decays as $1/n$. Adopting standard terminology \parencite[e.g.,][]{Hastie2015}, the former bound is referred to as a ``slow rate'' and the latter bound as a ``fast rate.''

The slow rate is presented in Theorem~\ref{thrm:groupslowbound} and proved in Appendix~\ref{app:thrmgroupslowbound}.
\begin{theorem}
\label{thrm:groupslowbound}
Let $\delta\in(0,1]$. Let $\gamma_k$ be the maximal eigenvalue of the matrix $\mathbf{X}_k^\top\mathbf{X}_k/n$ and
\begin{equation*}
\lambda_k\geq\frac{\sqrt{\gamma_k}\sigma}{\sqrt{n}}\sqrt{p_k+2\sqrt{p_k\log(g)+p_k\log(\delta^{-1})}+2\log(g)+2\log(\delta^{-1})},\quad k=1,\ldots,g.
\end{equation*}
Then the group subset estimator~\eqref{eq:shrinkgroupsubsetcon} satisfies
\begin{equation}
\label{eq:groupslowbound}
\frac{1}{n}\|\mathbf{f}^0-\hat{\mathbf{f}}\|^2\leq\underset{\substack{\bm{\beta}\in\mathbb{R}^p,\,\bar{\bm{\nu}}\in\mathcal{V}(s) \\ \bm{\beta}=\sum_k\bar{\bm{\nu}}_k}}{\min}\,\frac{1}{n}\|\mathbf{f}^0-\mathbf{X}\bm{\beta}\|^2+4\sum_k\lambda_k\|\bar{\bm{\nu}}_k\|
\end{equation}
with probability at least $1-\delta$.
\end{theorem}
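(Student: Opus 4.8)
The plan is to establish a standard oracle inequality of ``slow rate'' type, driving everything from the optimality of the estimator together with a single high-probability event that controls the noise group by group. Write $\hat{\bm{\beta}}=\sum_k\hat{\bm{\nu}}^{(k)}$ for the minimizer of \eqref{eq:shrinkgroupsubsetcon} with $q=1$ and its attaining decomposition, and let $\bm{\beta}=\sum_k\bm{\nu}^{(k)}$ with $\bar{\bm{\nu}}\in\mathcal{V}(s)$ be an arbitrary competitor. The basic inequality from optimality reads
\begin{equation*}
\frac{1}{n}\|\mathbf{y}-\mathbf{X}\hat{\bm{\beta}}\|_2^2+2\sum_{k=1}^g\lambda_k\|\hat{\bm{\nu}}^{(k)}\|_2\leq\frac{1}{n}\|\mathbf{y}-\mathbf{X}\bm{\beta}\|_2^2+2\sum_{k=1}^g\lambda_k\|\bm{\nu}^{(k)}\|_2.
\end{equation*}
Substituting $\mathbf{y}=\mathbf{f}^0+\bm{\varepsilon}$ and expanding both squared residuals, the common $\tfrac{1}{n}\|\bm{\varepsilon}\|_2^2$ terms cancel; rearranging isolates $\tfrac{1}{n}\|\mathbf{f}^0-\hat{\mathbf{f}}\|_2^2$ on the left, with the empirical process term $\tfrac{2}{n}\bm{\varepsilon}^\top\mathbf{X}(\hat{\bm{\beta}}-\bm{\beta})$ appearing on the right.

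The next step is to control this cross term using the latent decomposition. Each $\bm{\nu}^{(k)}$ is supported on $\mathcal{G}_k$, so $\mathbf{X}\bm{\nu}^{(k)}$ involves only the columns $\mathbf{X}_k$; writing the difference as a sum over groups and applying Cauchy--Schwarz group by group yields
\begin{equation*}
\frac{2}{n}\bm{\varepsilon}^\top\mathbf{X}(\hat{\bm{\beta}}-\bm{\beta})\leq\frac{2}{n}\sum_{k=1}^g\|\mathbf{X}_k^\top\bm{\varepsilon}\|_2\left(\|\hat{\bm{\nu}}^{(k)}\|_2+\|\bm{\nu}^{(k)}\|_2\right).
\end{equation*}
Define the event $\mathcal{E}:=\{\tfrac{1}{n}\|\mathbf{X}_k^\top\bm{\varepsilon}\|_2\leq\lambda_k\text{ for all }k\}$. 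On $\mathcal{E}$, substituting this bound back and combining with the penalty terms, the contribution $2\sum_k\lambda_k\|\hat{\bm{\nu}}^{(k)}\|_2$ cancels against the penalty carried over from the left-hand side, leaving precisely $\tfrac{1}{n}\|\mathbf{f}^0-\mathbf{X}\bm{\beta}\|_2^2+4\sum_k\lambda_k\|\bm{\nu}^{(k)}\|_2$. Since $\bm{\beta}$ was arbitrary, minimizing over all admissible decompositions gives \eqref{eq:groupslowbound}.

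The crux is to show $\mathbb{P}(\mathcal{E})\geq1-\delta$, which is where the precise form of $\lambda_k$ originates. Since $\bm{\varepsilon}\sim\mathcal{N}(\mathbf{0},\sigma^2\mathbf{I})$, the vector $\mathbf{X}_k^\top\bm{\varepsilon}$ is Gaussian with covariance $\sigma^2\mathbf{X}_k^\top\mathbf{X}_k$, so $\|\mathbf{X}_k^\top\bm{\varepsilon}\|_2^2$ is a weighted sum of independent $\chi_1^2$ variables with weights equal to the eigenvalues of $\sigma^2\mathbf{X}_k^\top\mathbf{X}_k$. I would apply the Laurent--Massart tail bound, bounding each eigenvalue by $\sigma^2 n\gamma_k$ and the number of terms by $p_k$, to obtain $\mathbb{P}(\|\mathbf{X}_k^\top\bm{\varepsilon}\|_2^2\geq\sigma^2 n\gamma_k(p_k+2\sqrt{p_k t}+2t))\leq e^{-t}$. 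Setting $t=\log(g/\delta)$ makes the threshold equal $n^2\lambda_k^2$, and a union bound over the $g$ groups contributes a factor $g$ that is absorbed by $e^{-t}=\delta/g$, delivering $\mathbb{P}(\mathcal{E})\geq1-\delta$. I expect this concentration step to be the main obstacle: matching the intricate expression under the square root defining $\lambda_k$ requires care with the weighted chi-square tail, and one must check that rank deficiency of $\mathbf{X}_k^\top\mathbf{X}_k$ in the overlapping case is harmless---it only reduces the count of nonzero eigenvalues, so the bounds in terms of $p_k$ and $\gamma_k$ remain valid.
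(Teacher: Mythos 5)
Your proposal is correct and follows essentially the same route as the paper: the basic inequality from optimality, a group-by-group Cauchy--Schwarz bound on $\tfrac{2}{n}\bm{\varepsilon}^\top\mathbf{X}(\hat{\bm{\beta}}-\bm{\beta})$ via the latent decomposition, and a Laurent--Massart chi-squared tail bound with a union bound over the $g$ groups to certify the event $\{\tfrac1n\|\mathbf{X}_k^\top\bm{\varepsilon}\|_2\le\lambda_k\ \forall k\}$. The only cosmetic difference is that you treat $\|\mathbf{X}_k^\top\bm{\varepsilon}\|_2^2$ directly as a weighted sum of $\chi_1^2$ variables with weights bounded by $\sigma^2 n\gamma_k$, whereas the paper factors through the singular value decomposition of $\mathbf{X}_k$ to reduce to a pure $\chi^2(p_k)$; these are equivalent, and your remark that rank deficiency under overlap only helps is the same observation the paper relies on implicitly.
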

In the case of no overlap, Theorem~\ref{thrm:groupslowbound} demonstrates that the shrinkage estimator satisfies the same slow rate as group lasso \parencite[][Theorem~3.1]{Lounici2011}. An identical expression to \textcite{Lounici2011} for $\lambda_k$ can be stated here using a more intricate chi-squared tail bound in the proof. In the case of overlap, the same slow rate can be obtained for group lasso from \textcite[][Lemma~4]{Percival2012}.

The following assumption is required to establish the fast rate.
\begin{assumption}
\label{asmn:sparseeigen}
Let $s<\min(n/p_\mathrm{max},g)/2$. Then there exists a $\phi_{2s}>0$ such that
\begin{equation*}
\underset{\substack{\bm{\theta}\in\mathbb{R}^p,\,\bar{\bm{\nu}}\in\mathcal{V}(2s) \\ \bm{\theta}=\sum_k\bar{\bm{\nu}}_k\neq\mathbf{0}}}{\min}\,\frac{\|\mathbf{X}\bm{\theta}\|}{\sqrt{n}\sum_k\|\bar{\bm{\nu}}_k\|}\geq\phi_{2s}.
\end{equation*}
\end{assumption}
Assumption~\ref{asmn:sparseeigen} is satisfied provided no collection of $2s$ groups have linearly dependent columns in $\mathbf{X}$. This condition is a weaker version of the restricted eigenvalue condition used in \textcite{Lounici2011,Percival2012} for group lasso, which (loosely speaking) places additional restrictions on the correlations of the columns in $\mathbf{X}$.

The fast rate is presented in Theorem~\ref{thrm:groupfastbound} and proved in Appendix~\ref{app:thrmgroupfastbound}. The notation $\lambda_\mathrm{max}:=\max_k\lambda_k$ represents the maximal shrinkage parameter.
\begin{theorem}
\label{thrm:groupfastbound}
Let Assumption~\ref{asmn:sparseeigen} hold. Let $\delta\in(0,1]$ and $\alpha\in(0,1)$. Let $\lambda_1,\ldots,\lambda_g\geq0$. Then, for some numerical constant $C>0$, the group subset estimator~\eqref{eq:shrinkgroupsubsetcon} satisfies
\begin{equation}
\label{eq:groupfastbound}
\begin{split}
&\frac{1}{n}\|\mathbf{f}^0-\hat{\mathbf{f}}\|^2\leq\underset{\substack{\bm{\beta}\in\mathbb{R}^p,\,\bar{\bm{\nu}}\in\mathcal{V}(s) \\ \bm{\beta}=\sum_k\bar{\bm{\nu}}_k}}{\min}\,\frac{1+\alpha}{(1-\alpha)n}\|\mathbf{f}^0-\mathbf{X}\bm{\beta}\|^2 \\
&\hspace{1.5in}+\frac{C\sigma^2}{\alpha(1-\alpha)n}\left[sp_\mathrm{max}+s\log\left(\frac{g}{s}\right)+\log(\delta^{-1})\right]+\frac{C\lambda_\mathrm{max}^2}{\alpha(1-\alpha)\phi_{2s}^2}
\end{split}
\end{equation}
with probability at least $1-\delta$.
\end{theorem}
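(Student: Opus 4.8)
The plan is to reproduce the oracle-inequality argument behind Theorem~\ref{thrm:groupsubsetbound}, the only genuinely new ingredient being the conversion of the shrinkage penalty into a quadratic remainder via Assumption~\ref{asmn:sparseeigen}. Let $\hat{\bm{\beta}}=\sum_k\hat{\bm{\nu}}^{(k)}$ denote a global minimizer of \eqref{eq:shrinkgroupsubsetcon} with $q=1$, and let $\bm{\beta}^\star=\sum_k\bm{\nu}^{\star(k)}$ with $\bar{\bm{\nu}}^\star\in\mathcal{V}(s)$ be an arbitrary feasible competitor; proving the bound for every such $\bm{\beta}^\star$ yields the minimum on the right-hand side. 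Comparing objective values at $\hat{\bm{\beta}}$ and $\bm{\beta}^\star$ and substituting $\mathbf{y}=\mathbf{f}^0+\bm{\varepsilon}$, the squared-noise terms cancel and I obtain the basic inequality
\begin{equation*}
\frac{1}{n}\|\mathbf{f}^0-\hat{\mathbf{f}}\|_2^2\leq\frac{1}{n}\|\mathbf{f}^0-\mathbf{X}\bm{\beta}^\star\|_2^2+\frac{2}{n}\bm{\varepsilon}^\top(\hat{\mathbf{f}}-\mathbf{X}\bm{\beta}^\star)+2\sum_{k=1}^g\lambda_k\left(\|\bm{\nu}^{\star(k)}\|_2-\|\hat{\bm{\nu}}^{(k)}\|_2\right).
\end{equation*}
The goal is to bound each of the two cross terms by a multiple of $\|\hat{\mathbf{f}}-\mathbf{X}\bm{\beta}^\star\|_2$ and then absorb them.

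The central structural remark is that since $\hat{\bm{\beta}}$ and $\bm{\beta}^\star$ each activate at most $s$ latent groups, the tuple $(\hat{\bm{\nu}}^{(k)}-\bm{\nu}^{\star(k)})_k$ is a legitimate element of $\mathcal{V}(2s)$ decomposing $\hat{\bm{\beta}}-\bm{\beta}^\star$; consequently $\hat{\mathbf{f}}-\mathbf{X}\bm{\beta}^\star$ lies in the span of at most $2s$ group matrices, a subspace of dimension at most $2sp_\mathrm{max}$. I would bound the empirical-process term exactly as in the proof of Theorem~\ref{thrm:groupsubsetbound}: let $P_S$ be the orthogonal projection onto this (data-dependent) span, apply Cauchy--Schwarz to get $\frac{2}{n}\bm{\varepsilon}^\top(\hat{\mathbf{f}}-\mathbf{X}\bm{\beta}^\star)\leq\frac{2}{n}\|P_S\bm{\varepsilon}\|_2\|\hat{\mathbf{f}}-\mathbf{X}\bm{\beta}^\star\|_2$, and then invoke a uniform chi-squared tail bound together with a union bound over the $\binom{g}{2s}$ possible supports to secure $\|P_S\bm{\varepsilon}\|_2^2\leq C\sigma^2[sp_\mathrm{max}+s\log(g/s)+\log(\delta^{-1})]$ simultaneously over all such subspaces, with probability at least $1-\delta$.

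The new step handles the penalty difference. The group-wise reverse triangle inequality gives $\|\bm{\nu}^{\star(k)}\|_2-\|\hat{\bm{\nu}}^{(k)}\|_2\leq\|\bm{\nu}^{\star(k)}-\hat{\bm{\nu}}^{(k)}\|_2$, so the shrinkage gap is at most $2\lambda_\mathrm{max}\sum_k\|\hat{\bm{\nu}}^{(k)}-\bm{\nu}^{\star(k)}\|_2$. Applying Assumption~\ref{asmn:sparseeigen} to $\bm{\theta}=\hat{\bm{\beta}}-\bm{\beta}^\star$ with the decomposition $(\hat{\bm{\nu}}^{(k)}-\bm{\nu}^{\star(k)})_k\in\mathcal{V}(2s)$ yields $\sum_k\|\hat{\bm{\nu}}^{(k)}-\bm{\nu}^{\star(k)}\|_2\leq\|\mathbf{X}(\hat{\bm{\beta}}-\bm{\beta}^\star)\|_2/(\sqrt{n}\,\phi(2s))=\|\hat{\mathbf{f}}-\mathbf{X}\bm{\beta}^\star\|_2/(\sqrt{n}\,\phi(2s))$, so the gap is at most $\tfrac{2\lambda_\mathrm{max}}{\phi(2s)}\cdot\tfrac{1}{\sqrt{n}}\|\hat{\mathbf{f}}-\mathbf{X}\bm{\beta}^\star\|_2$. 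This is precisely the source of the final $\lambda_\mathrm{max}^2/\phi(2s)^2$ term.

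To finish, set $D=\tfrac1n\|\mathbf{f}^0-\hat{\mathbf{f}}\|_2^2$, $A=\tfrac1n\|\mathbf{f}^0-\mathbf{X}\bm{\beta}^\star\|_2^2$, $V^2=\tfrac1n\|P_S\bm{\varepsilon}\|_2^2$, and $\Lambda=\lambda_\mathrm{max}/\phi(2s)$. Both cross terms share the factor $\tfrac1{\sqrt n}\|\hat{\mathbf{f}}-\mathbf{X}\bm{\beta}^\star\|_2\leq\sqrt{D}+\sqrt{A}$, so the basic inequality reads $D\leq A+2(V+\Lambda)(\sqrt{D}+\sqrt{A})$. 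The weighted inequality $2xy\leq\alpha x^2+\alpha^{-1}y^2$ applied to $\sqrt D$ and $\sqrt A$ gives $(1-\alpha)D\leq(1+\alpha)A+2\alpha^{-1}(V+\Lambda)^2$, and bounding $(V+\Lambda)^2\leq 2V^2+2\Lambda^2$ separates the noise and shrinkage contributions, producing exactly \eqref{eq:groupfastbound} after dividing by $(1-\alpha)$ and absorbing numerical constants into $C$. The main obstacle is the latent-decomposition bookkeeping: one must verify that the difference of two $s$-sparse decompositions is itself a bona fide element of $\mathcal{V}(2s)$---so that both the dimension count feeding the chi-squared union bound and the lower bound of Assumption~\ref{asmn:sparseeigen} apply to the \emph{same} vector $\hat{\bm{\beta}}-\bm{\beta}^\star$---and that the reverse triangle inequality remains valid group by group despite the non-uniqueness of the decomposition.
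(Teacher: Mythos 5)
Your proposal is correct and follows essentially the same route as the paper: the same basic inequality from optimality, the same group-sparse concentration bound for the cross term, the same combination of the group-wise triangle inequality with Assumption \ref{asmn:sparseeigen} to convert the penalty gap into a $\lambda_\mathrm{max}/\phi(2s)$ multiple of $\|\mathbf{X}(\hat{\bm{\beta}}-\bm{\beta}^\star)\|_2$, and the same absorption via Young's inequality. The only cosmetic difference is that you bound the empirical process by a chi-squared tail on $\|P_S\bm{\varepsilon}\|_2$ with a union bound over supports, whereas the paper's Lemma \ref{lemma:subsettailbound} reaches the identical estimate through an $\epsilon$-net and a scalar Gaussian tail bound.
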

Theorem~\ref{thrm:groupfastbound} establishes that the shrinkage estimator achieves the bound of the regular estimator up to an additional term that depends on $\lambda_\mathrm{max}$ and $\phi_{2s}$. By setting the shrinkage parameters to zero, the dependence on these terms vanishes, and the bounds are identical.

Theorems~\ref{thrm:groupslowbound} and \ref{thrm:groupfastbound} together show that group subset with shrinkage does no worse than group lasso or group subset. This property is helpful because group lasso tends to outperform when the noise is high or the sample size is small, while group subset tends to outperform in the opposite situation. This empirical observation is consistent with the above bounds since the slow rate~\eqref{eq:groupslowbound} depends on $\sigma/\sqrt{n}$ while the fast rate~\eqref{eq:groupfastbound} depends on $\sigma^2/n$. Hence, \eqref{eq:groupslowbound} is typically the tighter of the two bounds for large $\sigma$ or small $n$.

\section{Implementation}
\label{sec:implementation}

Our estimators and the algorithms for their computation are implemented in \texttt{grpsel}, an \texttt{R} package available on the public repository \texttt{CRAN}. To facilitate high-performance computation, the coordinate descent and local search algorithms from Section~\ref{sec:algorithms} are written in \texttt{C++}, making \texttt{grpsel} as fast (or faster) than \texttt{grpreg} \parencite{Breheny2015}, \texttt{grpregOverlap} \parencite{Zeng2016}, and \texttt{gamsel} \parencite{Chouldechova2015}---existing state-of-the-art packages for structured sparsity. \texttt{grpsel} currently supports square and logistic loss functions with overlapping and nonoverlapping groups. In contrast to packages like \texttt{grpregOverlap}, which handle overlapping groups by replicating a predictor each time it appears in a new group, our package does not use replication, making it more memory-efficient.\footnote{See also the \texttt{Julia} package \texttt{ParProx} \parencite{Ko2021}, which provides tools for memory-efficient structured sparsity.} Our package also provides functionality for automatically orthogonalizing the groups (i.e., transforming the group matrix $\mathbf{X}_k$ such that $\mathbf{X}_k^\top\mathbf{X}_k=\mathbf{I}$), which can have certain statistical benefits \parencite{Simon2012}. Whereas \texttt{grpreg} and \texttt{grpregOverlap} require this orthogonalization to operate correctly, it remains purely optional in \texttt{grpsel}.

\section{Simulations}
\label{sec:simulations}

This section investigates the statistical and computational performance of the proposed estimators for sparse semiparametric modeling on synthetic data. They are compared against group lasso, group SCAD, and group MCP, the latter two estimators being group versions of the smoothly clipped absolute deviation penalty \parencite{Fan2001} and the minimax concave penalty \parencite{Zhang2010}. These penalties introduce a nonconvexity parameter that helps remove unwanted shrinkage. The group subset estimators are fit using \texttt{grpsel}. Group lasso, group SCAD, and group MCP are fit using \texttt{grpreg}. The range of tuning parameters for each estimator is detailed in Appendix~\ref{app:tuning}.

All estimators use the same approach of fitting a sparse semiparametric model via regression splines and overlapping groups as described in Section~\ref{sec:intro}. The group penalty parameters are scaled to control the trade-off between fitting $f_j$ as linear or nonlinear. For group subset penalty parameter $\lambda$, we set $\lambda_k=\lambda$ for $k$ a linear group and $\lambda_k=2\lambda$ for $k$ a nonlinear group. For group lasso penalty parameter $\lambda$, we set $\lambda_k=\sqrt{2}\lambda$ for nonlinear groups to achieve an equivalent penalization.

Unlike \texttt{grpsel}, \texttt{grpreg} does not natively support overlapping groups. Unfortunately, the package \texttt{grpregOverlap}, which extends \texttt{grpreg} to handle overlapping groups, is no longer maintained on \texttt{CRAN}. To this end, we reimplement the approach in \texttt{grpregOverlap} of expanding the predictor matrix $\mathbf{X}\in\mathbb{R}^{n\times p}$ by replicating a predictor each time it appears in a new group to get $\tilde{\mathbf{X}}\in\mathbb{R}^{n\times\sum_kp_k}=(\mathbf{X}_1,\ldots,\mathbf{X}_g)$. \texttt{grpreg} is then run on the expanded predictor matrix $\tilde{\mathbf{X}}$ with nonoverlapping groups. This approach draws on the work of \textcite{Jacob2009}.

\subsection{Simulation design}

We study regression and classification. For regression, the response is generated according to
\begin{equation*}
y_i=f^0(\mathbf{x}_i)+\varepsilon_i,\quad i=1,\ldots,n,
\end{equation*}
while, for classification, it is generated as
\begin{equation*}
y_i=
\begin{cases}
0 & \text{if}~f^0(\mathbf{x}_i)+\varepsilon_i<0 \\
1 & \text{if}~f^0(\mathbf{x}_i)+\varepsilon_i\geq0,
\end{cases}
\quad i=1,\ldots,n.
\end{equation*}
In both cases, $f^0(\mathbf{x})=\sum_jf_j^0(x_j)$ and $\varepsilon_i\sim\mathcal{N}(0,\sigma^2)$. The covariates $\mathbf{x}_i$ are treated fixed and constructed by (1) drawing samples iid from $\mathcal{N}(\mathbf{0},\bm{\Sigma})$, (2) applying the standard normal distribution function to produce uniformly distributed variables that conform to $\bm{\Sigma}$ \parencite[see][]{Falk1999}, and (3) min-max scaling to the interval $[-1,1]$. The correlation matrix $\bm{\Sigma}$ is defined elementwise as $\Sigma_{i,j}=\rho^{|i-j|}$. The number of covariates is 2,500. For regression, 50 of the these covariates are selected at random to be nonzero: 40 relate to the response linearly as $f(x)=x$ and 10 relate nonlinearly as $f(x)=\cos(\pi x)$, $f(x)=\sin(\pi x)$, or $f(x)=\exp(10 x)$. For classification, support recovery is more difficult, so for that task 10 covariates are linear and 5 are nonlinear. The function evaluations are scaled to mean zero and variance one so that all functions are on the same footing. Each of the 2,500 covariates are expanded using a cubic regression spline containing three knots at equispaced quantiles.\footnote{Placing knots at equispaced quantiles is a common knot selection rule \parencite[see, e.g.,][]{James2021} that has the benefit of making the spline more flexible in high-density regions while restricting flexibility in low-density regions where less information is available.} Four terms are in each spline so that the number of predictors $p=10,000$. The number of groups $g=5,000$, consisting of 2,500 linear groups and 2,500 nonlinear groups. The sample size $n=1,000$ is fixed and the noise parameter $\sigma$ is varied to alter the signal-to-noise ratio (SNR).

\subsection{Statistical performance}

For regression, we measure out-of-sample test loss by the mean square loss on a testing set and report it relative to that of the best-performing estimator: 
\begin{equation*}
\operatorname{Relative~test~loss}:=\frac{\operatorname{Mean~square~loss}-\operatorname{Mean~square~loss}^\star}{\operatorname{Mean~square~loss}^\star},
\end{equation*}
where $\star$ indicates the minimal mean square (test) loss among the five estimators considered. The best possible value of this metric is zero. For classification, we report the same metric but measured in terms of mean logistic loss. As a measure of sparsity, we report the number of fitted functions that are nonzero:
\begin{equation*}
\operatorname{Sparsity}:=\sum_j\hat{f}_j\not\equiv0.
\end{equation*}
Finally, as a measure of support recovery, we report the micro F1-score for the classification of linear and nonlinear functions:
\begin{equation*}
\operatorname{Support~recovery}:=\frac{2\cdot\text{True positives}}{2\cdot\text{True positives}+\text{False positives}+\text{False negatives}}.
\end{equation*}
The best possible value of this metric is one and the null value is zero. These metrics are all evaluated using tuning parameters that minimize mean square loss or mean logistic loss over a separate validation set of size $n$.

The metrics under consideration are aggregated over 30 simulations. The regression results are reported in Figure~\ref{fig:sam-square} and the classification results in Figure~\ref{fig:sam-logistic}. The solid points are averages and the error bars are standard errors.
\begin{figure}[ht]
\centering
\input{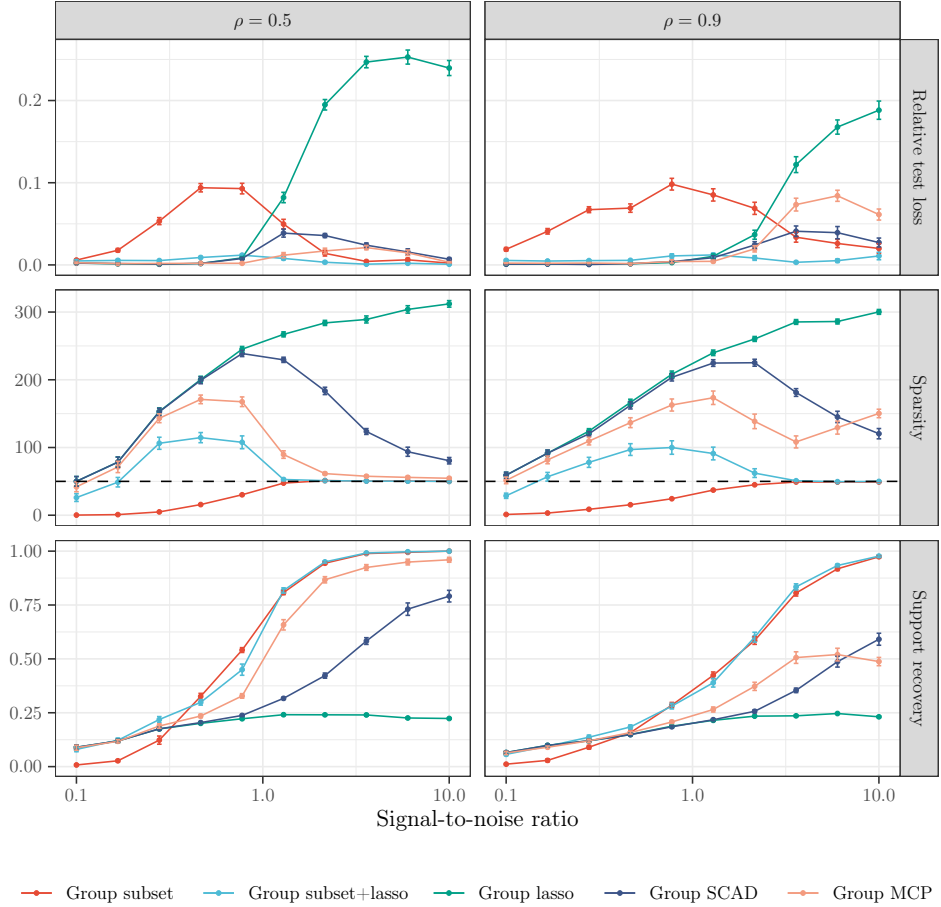}
\caption{Comparisons of estimators for sparse semiparametric regression. Metrics are aggregated over 30 synthetic datasets generated with $n=1,000$, $p=10,000$, and $g=5,000$. Solid points represent averages and error bars denote (one) standard errors. Dashed lines indicate the true number of nonzero functions.}
\label{fig:sam-square}
\end{figure}
\begin{figure}[ht]
\centering
\input{Figures/statistical-logistic.tex}
\caption{Comparisons of estimators for sparse semiparametric classification. Metrics are aggregated over 30 synthetic datasets generated with $n=1,000$, $p=10,000$, and $g=5,000$. Solid points represent averages and error bars denote (one) standard errors. Dashed lines indicate the true number of nonzero functions.}
\label{fig:sam-logistic}
\end{figure}

Consider first the regression results. In line with our theory, group subset exhibits excellent performance when the signal is strong but fares poorly when the signal is weak. Group lasso behaves contrarily, performing capably at low SNRs but poorly at high SNRs. The transition between the two estimators in terms of relative test loss occurs at $\operatorname{SNR}\approx2$ in the moderate-correlation scenario ($\rho=0.5$) and later at $\operatorname{SNR}\approx3$ in the high-correlation scenario ($\rho=0.9$). Group subset+shrinkage achieves the best of both worlds. It improves the performance of group subset when the signal is weak and, by tapering off shrinkage, eventually behaves like group subset when the signal is strong. Group SCAD and group MCP also try to unwind shrinkage via their nonconvexity parameters. Even so, there remains a gap between these estimators and group subset+shrinkage. The latter converges earlier to the right sparsity level and the correct model. The gap is most stark in the high-correlation scenario, where their support recovery is roughly half that of the group subset estimators at $\operatorname{SNR}=10$.

Turning our attention to the classification results, we again see that group lasso maintains an edge in prediction over group subset when the signal is weak and vice-versa when the signal is strong. As before, the gap between group lasso and group subset at low SNRs is closed once shrinkage is introduced. The group subset estimators also continue to have a clear upper hand in support recovery. Closer inspection reveals this upper hand is because they make fewer false positive selections than the competing estimators, similar to the regression setting. There is, however, one interesting difference to the regression setting: group subset+shrinkage has an advantage over group subset when the SNR is high. In this regime, it has noticeably lower relative test loss and marginally better support recovery. This result corresponds to a well-known phenomenon where the maximum likelihood estimator diverges as the classes become separable \parencite[see][]{Hastie2015}. Shrinkage has the desirable side-effect of preventing this divergence.

\subsection{Computational performance}

We now compare the computational performance of \texttt{grpsel} against \texttt{grpreg} for regression. Our estimators---group subset and group subset+shrinkage---and those of \texttt{grpreg}---group lasso, group SCAD, and group MCP---each solve different optimization problems, so it does not make sense to ask whether one implementation is faster than another for the same problem. Rather, the purpose of these comparisons is to provide indications of run time and computational complexity for alternative approaches to structured sparsity. Both \texttt{grpsel} and \texttt{grpreg} are set with a convergence tolerance of $10^{-4}$. All run times and iteration counts are measured with reference to the coordinate descent algorithms of each package over a grid of the primary tuning parameter. Where there is a secondary tuning parameter (e.g., the shrinkage parameter in \texttt{grpsel} or nonconvexity parameter in \texttt{grpreg}), the figures reported are averaged over the number of secondary parameter values evaluated. The simulation design is as before, but we now fix the SNR and vary the number of covariates to measure scalability.

The results as aggregated over 30 synthetic datasets are reported in Figure~\ref{fig:timings}. The vertical bars are averages and the error bars are standard errors.
\begin{figure}[ht]
\centering
\input{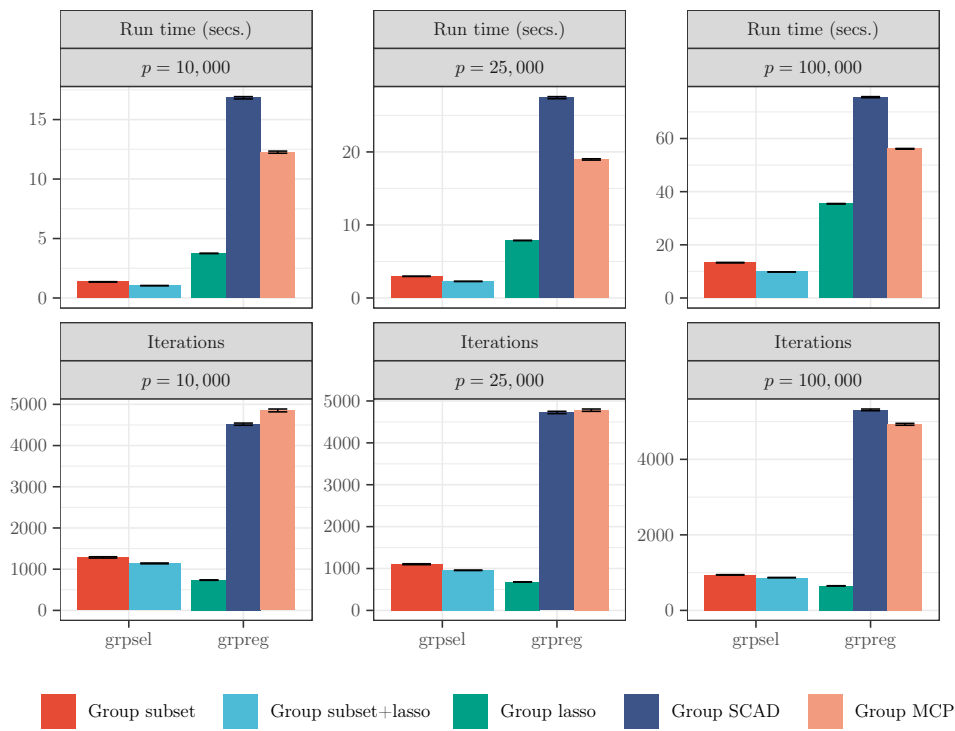}
\caption{Comparisons of \texttt{grpsel} and \texttt{grpreg}, and their estimators. Metrics are aggregated over 30 synthetic datasets generated with $\operatorname{SNR}=1$, $\rho=0.5$, and $n=1,000$. Vertical bars represent averages and error bars denote (one) standard errors.}
\label{fig:timings}
\end{figure}
For $p=10,000$, \texttt{grpsel} can compute an entire solution path in one to two seconds. Group subset+shrinkage achieves marginally lower run times than group subset, requiring slightly fewer iterations to converge thanks to the additional regularization from shrinkage. For $p=25,000$ the run times from \texttt{grpsel} are still less than five seconds to fit a path, while for $p=100,000$ the times are around 10 seconds. \texttt{grpreg} is also impressive in these scenarios, though relative to \texttt{grpsel} it is slower. Group lasso converges in the fewest iterations, followed by group subset+shrinkage. Group SCAD and group MCP always take several thousand iterations to converge.

Appendix~\ref{app:gamsel} reports additional computational experiments comparing \texttt{grpsel} with \texttt{gamsel} \parencite{Chouldechova2015}, an alternative package for sparse semiparametric modeling via overlapping groups and regression splines.

\section{Data analyses}
\label{sec:data}

This section studies two contemporary problems: modeling foot traffic in major supermarkets and modeling recessions in the business cycle. Both problems are characterized by the availability of many candidate predictors and the possibility for misspecification of linear models. These characteristics motivate consideration of sparse semiparametric models.

\subsection{Supermarket foot traffic}

The first dataset contains anonymized data on foot traffic and sales volumes for a major Chinese supermarket \parencite[see][]{Wang2009}.\footnote{Available at \url{https://personal.psu.edu/ril4/DataScience}.} The task is to model foot traffic using the sales volumes of different products. To facilitate managerial decision-making, the fitted model should identify a subset of products that well-predict foot traffic (i.e., it should be sparse).

The sample contains $n=464$ days. We randomly hold out 10\% of the data as a testing set and use the remaining data as a training set. Sales volumes are available for 6,398 products. A four-term cubic regression spline is used for each product, resulting in $p=25,592$ predictors and $g=12,796$ groups (6,398 linear groups plus 6,398 nonlinear groups). As a measure of predictive accuracy, we report mean square loss on the testing set. We also report the number of fitted functions that are nonzero. As benchmarks, we include random forest and lasso, which respectively produce dense nonparametric models and sparse linear models. In addition, the (unconditional) mean is evaluated as a predictive method to assess the value added by the predictors. Ten-fold cross-validation is used to choose tuning parameters.

The metrics under consideration are aggregated over 30 training-testing set splits and reported in Table~\ref{tab:market}. Averages are reported with standard errors in parentheses.
\begin{table}[ht]
\centering
\small
\begin{tabular}{lllll}
\toprule
 &  & \multicolumn{3}{c}{Sparsity} \\ 
\cmidrule(lr){3-5}
 & Mean square loss & Total & Linear & Nonlinear \\ 
\midrule
Group subset+shrinkage & 0.650 (0.026) & 178.6 (2.3) & 115.7 (1.2) & 62.9 (2.0) \\ 
Group lasso & 0.651 (0.025) & 267.0 (12.9) & 132.9 (2.3) & 134.1 (11.0) \\ 
Group MCP & 0.658 (0.026) & 215.1 (6.1) & 120.9 (1.2) & 94.1 (5.5) \\ 
Lasso & 0.712 (0.028) & 165.2 (5.0) & 165.2 (5.0) & - \\ 
Random forest & 1.330 (0.057) & - & - & - \\ 
Mean & 9.856 (0.367) & - & - & - \\ 
\bottomrule
\end{tabular}

\caption{Comparisons of methods for modeling supermarket foot traffic. Metrics are aggregated over 30 splits of the data into training and testing sets. Averages are reported next to (one) standard errors in parentheses.}
\label{tab:market}
\end{table}
Group subset+shrinkage used to fit sparse semiparametric models leads to the lowest mean square loss, though within statistical precision of group lasso and group MCP. Nonetheless, group subset+shrinkage has the edge of selecting fewer products than either of these estimators, which is advantageous for decision-making. Lasso yields models that are slightly more sparse but at the same time substantially less predictive, signifying linearity might be too restrictive. Random forest is markedly worse than any sparse method. It appears only a small fraction of products explain foot traffic, around 2--4\%.

\subsection{Economic recessions}

The second dataset contains monthly data on macroeconomic series for the United States \parencite[see][]{McCracken2016}.\footnote{The January 2022 vintage is used, available at \url{https://research.stlouisfed.org/econ/mccracken}. See Appendix~\ref{app:macro} for preprocessing steps.} The dataset is augmented with the National Bureau of Economic Research recession indicator, a dummy variable that indicates whether a given month is a period of recession or expansion.\footnote{Available at \url{https://fred.stlouisfed.org/series/USREC}.} The task is to model the recession indicator using the macroeconomic series. Such models are useful for scenario analysis and for assessing economic conditions in the absence of low-frequency variables such as quarterly gross domestic product growth.

The sample contains $n=746$ months, ending in October 2021. It includes the COVID-19 recession. We again randomly hold out 10\% of the data as a testing set and use the remaining data as a training set. Because there are relatively few recessionary periods, a stratified split is applied so that the proportion of recessions in the testing and training sets are equal. The dataset has 127 macroeconomic series. We add to this dataset six lags of each series, leading to a total set of 889 series. Applying a four-term spline to each series yields $p=3556$ predictors and $g=1778$ groups (889 linear groups plus 889 nonlinear groups). To evaluate predictive accuracy, we report mean logistic loss on the testing set. The remaining metrics and methods are as before.

The results as aggregated over 30 training-testing set splits are reported in Table~\ref{tab:macro}.
\begin{table}[ht]
\centering
\small
\begin{tabular}{lllll}
\toprule
 &  & \multicolumn{3}{c}{Sparsity} \\ 
\cmidrule(lr){3-5}
 & Mean logistic loss & Total & Linear & Nonlinear \\ 
\midrule
Group subset+shrinkage & 0.968 (0.103) & 44.0 (2.5) & 16.2 (0.9) & 27.8 (1.8) \\ 
Group lasso & 1.063 (0.102) & 57.5 (1.5) & 30.7 (0.7) & 26.8 (1.0) \\ 
Group MCP & 1.141 (0.092) & 31.5 (0.7) & 18.7 (0.5) & 12.8 (0.4) \\ 
Lasso & 1.093 (0.107) & 56.8 (1.1) & 56.8 (1.1) & - \\ 
Random forest & 1.294 (0.039) & - & - & - \\ 
Mean & 3.703 (0.000) & - & - & - \\ 
\bottomrule
\end{tabular}

\caption{Comparisons of methods for modeling economic recessions. Metrics are aggregated over 30 splits of the data into training and testing sets. Averages are reported next to (one) standard errors in parentheses.}
\label{tab:macro}
\end{table}
The sparse semiparametric models from group subset+shrinkage predict recessionary periods well. They perform better than those from group lasso and, at the same time, depend on fewer macroeconomic series. Group MCP is sparser still but also less predictive. Lasso also yields models worse than those from group lasso and group subset+shrinkage, highlighting the value in allowing for nonlinearity. As with the supermarket dataset, the dense models from random forest underperform relative to the sparse models from other methods. All methods improve on the mean, illustrating the predictive content of monthly series.

\section{Concluding remarks}
\label{sec:conclusion}

Despite a broad array of applications, structured sparsity via group subset selection is not well-studied, especially in high dimensions where it is computationally taxing. This paper represents an effort to close the gap. Our optimization framework consists of low complexity algorithms that come with convergence results. A theoretical analysis of the proposed estimators illuminates some of their finite-sample properties. The estimators behave favorably in simulation, exhibiting excellent support recovery when fitting sparse semiparametric models. In real-world modeling tasks, they improve on popular benchmarks.

Our implementation \texttt{grpsel} is available on the \texttt{R} repository \texttt{CRAN}.

\section*{Acknowledgments}

Ryan Thompson's research was supported by an Australian Government Research Training Program (RTP) Scholarship.

\printbibliography

\begin{appendices}

\section{Computation}

\subsection{Proof of Proposition~\ref{prop:cdupdate}}
\label{app:propcdupdate}

\begin{proof}
The subscript $k$ is dropped from $c_k$, $\lambda_{0k}$, and $\lambda_{1k}$ to simplify the notation. Since the objective is treated as a function in the $k$th group of coordinates $\bm{\nu}_k$ only, we have
\begin{equation*}
\begin{split}
\bar{F}_c(\bm{\nu};\tilde{\bm{\nu}})&\propto\nabla_k L(\tilde{\bm{\nu}})^\top(\bm{\nu}_k-\tilde{\bm{\nu}}_k)+\frac{c}{2}\|\bm{\nu}_k-\tilde{\bm{\nu}}_k\|^2+\lambda_01(\|\bm{\nu}_k\|\neq0)+\lambda_1\|\bm{\nu}_k\| \\
&\propto\frac{c}{2}\left\|\bm{\nu}_k-\left(\tilde{\bm{\nu}}_k-\frac{1}{c}\nabla_k L(\tilde{\bm{\nu}})\right)\right\|^2+\lambda_01(\|\bm{\nu}_k\|\neq0)+\lambda_1\|\bm{\nu}_k\| \\
&=\frac{c}{2}\left\|\bm{\nu}_k-\check{\bm{\nu}}_k\right\|^2+\lambda_01(\|\bm{\nu}_k\|\neq0)+\lambda_1\|\bm{\nu}_k\|,
\end{split}
\end{equation*}
where $\check{\bm{\nu}}_k=\tilde{\bm{\nu}}_k-1/c\nabla_kL(\tilde{\bm{\nu}})$. When $\lambda_1=0$, it is not hard to see a minimizer of $\bar{F}_c(\bm{\nu};\tilde{\bm{\nu}})$ is
\begin{equation}
\label{eq:proxsubset}
\hat{\bm{\nu}_k}=
\begin{dcases*}
\check{\bm{\nu}}_k & if $\|\check{\bm{\nu}}_k\|\geq\sqrt{\dfrac{2\lambda_0}{c}}$ \\
\mathbf{0} & otherwise.
\end{dcases*}
\end{equation}
When $\lambda_0=0$ and $\lambda_1>0$, the minimizer is
\begin{equation}
\label{eq:proxlasso}
\hat{\bm{\nu}}_k=
\begin{dcases*}
\left(1-\frac{\lambda_1}{c\|\check{\bm{\nu}}_k\|}\right)_+\check{\bm{\nu}}_k & if $\left(1-\dfrac{\lambda_1}{c\|\hat{\bm{\nu}}_k\|}\right)_+\|\check{\bm{\nu}}_k\|\geq0$ \\
\bm{0} & otherwise.
\end{dcases*}
\end{equation}
This expression follows from the proximal operator for the $l_2$-norm \parencite{Beck2017}. Combining \eqref{eq:proxsubset} with \eqref{eq:proxlasso} leads to the result of the proposition.
\end{proof}

\subsection{Coordinate descent heuristics}
\label{app:tricks}

We briefly outline the optimizations and heuristics used to accelerate our implementation of coordinate descent. These heuristics are similar to those used in the coordinate descent literature \parencite{Friedman2007,Breheny2011,Hazimeh2020}.

\subsubsection*{Gradient screening}

Rather than cycling through all $g$ groups in each coordinate descent round, it is convenient to restrict the updates to a smaller set of screened groups. The initialization $\bm{\nu}^{(0)}$ can be used to compute the coordinate-wise gradients $\{\|\nabla_kL(\bm{\nu}^{(0)})\|/\sqrt{p_k}\}_{k\not\in\mathcal{A}^{(0)}}$ which are already available as a consequence of selecting $\lambda_0$ dynamically (see Proposition~\ref{prop:lambdaseq}). The inactive groups whose gradients are among the top 500 largest are classed as ``strong,'' in addition to the active set of groups. The remaining groups are classed as ``weak.'' The coordinate descent updates are restricted to the strong groups until convergence is achieved, at which time a further round over the weak groups is performed. If the solution does not change after this further round, convergence is declared. Otherwise, any weak groups that have become active are shifted to the strong set, and the process is repeated.

\subsubsection*{Gradient ordering}

The solutions produced by coordinate descent often benefit from greedily ordering the groups. At the beginning of the algorithm, the groups are sorted according to their gradients. Any groups in $\mathcal{A}^{(0)}$ are placed first. The coordinate descent updates then proceed using this new ordering.

\subsubsection*{Active set updates}

The set of active groups typically stabilizes after several rounds of coordinate descent updates. At this time, several additional rounds are required for the nonzero coefficients to converge. Rather than cycling through the full set (or screened set) of groups, the updates are restricted to the active groups only, usually a small subset. Once convergence is achieved on the active set, a further round is performed over the inactive set to confirm overall convergence.

\subsection{Proof of Theorem~\ref{thrm:cdconverge}}
\label{app:thrmcdconverge}

The proof relies on the following lemma that characterizes the sequence of objective values from coordinate descent.

\begin{lemma}
\label{lemma:cdconverge}
Let $\bar{c}_k\geq c_k$ for all $k=1,\ldots,g$. Then the sequence of objective values $\{F(\bm{\nu}^{(m)})\}_{m\in\mathbb{N}}$ produced by Algorithm~\ref{alg:cd} is decreasing, convergent, and satisfies the inequality
\begin{equation*}
F(\bm{\nu}^{(m)})-F(\bm{\nu}^{(m+1)})\geq\sum_k\frac{\bar{c}_k-c_k}{2}\|\bm{\nu}_k^{(m+1)}-\bm{\nu}_k^{(m)}\|^2.
\end{equation*}
\begin{proof}
Denote by $\bm{\nu}^\star$ the result of applying the thresholding function~\eqref{eq:threshold} to $\tilde{\bm{\nu}}$. Starting from the inequality~\eqref{eq:upper} with $\bm{\nu}=\bm{\nu}^\star$, we add $R(\bm{\nu}^\star)$ to both sides to obtain
\begin{equation*}
F(\bm{\nu}^\star)\leq L(\tilde{\bm{\nu}})+\nabla_kL(\tilde{\bm{\nu}})^\top(\bm{\nu}_k^\star-\tilde{\bm{\nu}}_k)+\frac{c_k}{2}\|\bm{\nu}_k^\star-\tilde{\bm{\nu}}_k\|^2+R(\bm{\nu}^\star).
\end{equation*}
Adding $\bar{c}_k/2\|\bm{\nu}_k^\star-\tilde{\bm{\nu}}_k\|^2$ to both sides and rearranging terms leads to
\begin{equation*}
\begin{split}
F(\bm{\nu}^\star)&\leq L(\tilde{\bm{\nu}})+\nabla_kL(\tilde{\bm{\nu}})^\top(\bm{\nu}_k^\star-\tilde{\bm{\nu}}_k)+\frac{\bar{c}_k}{2}\|\bm{\nu}_k^\star-\tilde{\bm{\nu}}_k\|^2+R(\bm{\nu}^\star) +\frac{c_k-\bar{c}_k}{2}\|\bm{\nu}_k^\star-\tilde{\bm{\nu}}_k\|^2 \\
&=\bar{F}_{\bar{c}_k}(\bm{\nu}^\star;\tilde{\bm{\nu}})+\frac{c_k-\bar{c}_k}{2}\|\bm{\nu}_k^\star-\tilde{\bm{\nu}}_k\|^2.
\end{split}
\end{equation*}
Using $\bar{F}_{\bar{c}_k}(\bm{\nu}^\star;\tilde{\bm{\nu}})\leq\bar{F}_{\bar{c}_k}(\tilde{\bm{\nu}};\tilde{\bm{\nu}})=F(\tilde{\bm{\nu}})$, we reorganize terms to get
\begin{equation}
\label{eq:decrease}
F(\tilde{\bm{\nu}})-F(\bm{\nu}^\star)\geq \frac{\bar{c}_k-c_k}{2}\|\bm{\nu}_k^\star-\tilde{\bm{\nu}}_k\|^2.
\end{equation}
Now, define the vector
\begin{equation*}
\bm{\eta}^{(m)}_k:=
\begin{cases*}
(\bm{\nu}_1^{(m+1)\top},\ldots,\bm{\nu}_k^{(m+1)\top},\bm{\nu}_{k+1}^{(m)\top},\ldots,\bm{\nu}_g^{(m)\top})^\top & if $k>0$ \\
\bm{\nu}^{(m)} & otherwise.
\end{cases*}
\end{equation*}
Take $\tilde{\bm{\nu}}=\bm{\eta}_{k-1}^{(m)}$ and $\bm{\nu}^\star=\bm{\eta}_k^{(m)}$ and sum both sides of the inequality~\eqref{eq:decrease} over $1\leq k\leq g$ to get
\begin{equation*}
\sum_k[F(\bm{\eta}_{k-1}^{(m)})-F(\bm{\eta}_k^{(m)})]=F(\bm{\eta}_0^{(m)})-F(\bm{\eta}_g^{(m)})\geq\sum_k\frac{\bar{c}_k-c_k}{2}\|\bm{\nu}_k^{(m+1)}-\bm{\nu}_k^{(m)}\|^2.
\end{equation*}
By definition $\bm{\eta}_0^{(m)}=\bm{\nu}^{(m)}$ and $\bm{\eta}_g^{(m)}=\bm{\nu}^{(m+1)}$, establishing $\{F(\bm{\nu}^{(m)})\}_{m\in\mathbb{N}}$ is decreasing. Since $F(\bm{\nu})$ is bounded below, $\{F(\bm{\nu}^{(m)})\}_{m\in\mathbb{N}}$ must converge.
\end{proof}
\end{lemma}

We are now ready to prove Theorem~\ref{thrm:cdconverge}.

\begin{proof}
The proof is presented in three parts. First, we show the sequence of iterates $\{\bm{\nu}^{(m)}\}_{m\in\mathbb{N}}$ produced by Algorithm~\ref{alg:cd} stabilizes to a fixed support within a finite number of iterations. Second, we show these iterates converge to a stationary point $\bm{\nu}^\star$ per Definition~\ref{def:statpoint}. Third, we show $\bm{\nu}^\star$ is also a coordinate descent minimum point per Definition~\ref{def:cdmin}.

The first part is proven by contradiction along the lines of \textcite[][Theorem~1]{Dedieu2021}. Suppose the support does not stabilize in finitely many iterations. Choose an $m$ such that $\gs(\bm{\nu}^{(m+1)})\neq\gs(\bm{\nu}^{(m)})$. Then at least one group was added or removed from the support, i.e., there is a $k$ such that either (1) $\bm{\nu}_k^{(m)}=\mathbf{0}$ and $\bm{\nu}_k^{(m+1)}\neq\mathbf{0}$ or (2) $\bm{\nu}_k^{(m)}\neq\mathbf{0}$ and $\bm{\nu}_k^{(m+1)}=\mathbf{0}$. Consider case (1). It follows from Lemma~\ref{lemma:cdconverge}
\begin{equation*}
F(\bm{\nu}^{(m)})-F(\bm{\nu}^{(m+1)})\geq\frac{\bar{c}_k-c_k}{2}\|\bm{\nu}_k^{(m+1)}\|^2, 
\end{equation*}
and, because $\bm{\nu}_k^{(m+1)}$ is the output of the thresholding function~\eqref{eq:threshold}, it holds $\bm{\nu}_k^{(m+1)}\geq\sqrt{2\lambda_{0k}/\bar{c}_k}$. These inequalities together imply
\begin{equation*}
F(\bm{\nu}^{(m)})-F(\bm{\nu}^{(m+1)})\geq(\bar{c}_k-c_k)\frac{\lambda_{0k}}{\bar{c}_k}.
\end{equation*}
Similar working yields the same inequality for case (2). For $\bar{c}_k>c_k$, the quantity on the right-hand side is strictly positive. Hence, a change to the support yields a strict decrease in the objective value. However, if the support changes infinitely many times, this contradicts that $F(\bm{\nu})$ is bounded below. The support must, therefore, stabilize in a finite number of iterations.

We now turn to the second part of the proof that the iterates converge to a stationary point $\bm{\nu}^\star$. Since the active set stabilizes in finitely many iterations, there exists a finite $M$ such that the iterates of the sequence $\{\bm{\nu}^{(m)}\}_{m\geq M}$ share the same active set, say $\mathcal{A}$. Hence, for all $m\geq M$ and $k\in\mathcal{A}$, we have
\begin{equation*}
\bar{F}_{{\bar{c}}_k}(\bm{\nu};\bm{\nu}^{(m)})\propto\frac{\bar{c}_k}{2}\left\|\bm{\nu}_k-\left(\bm{\nu}_k^{(m)}-\frac{1}{\bar{c}_k}\nabla_kL(\bm{\nu}^{(m)})\right)\right\|^2+\lambda_{1k}\|\bm{\nu}_k\|.
\end{equation*}
Thus, the group subset penalty can be treated as fixed and the objective function reduces to that of the group lasso, which possesses the Kurdyka-Łojasiewicz property \parencite[see, e.g.,][]{Yu2022}. Furthermore, under the statement of the theorem, either (a) $\lambda_{1k}>0$ for all $k=1,\ldots,g$ (so the objective is coercive) or (b) no elements of $\bm{\nu}$ tend to $\pm\infty$. It follows then that the level set $\{\bm{\nu}\in\mathbb{R}^{\sum_kp_k}:F(\bm{\nu})\leq F(\bm{\nu}^{(0)})\}$ is bounded when the initialization $\bm{\nu}^{(0)}\in\mathbb{R}^{\sum_kp_k}$. Hence, by the descent property of Lemma~\ref{lemma:cdconverge}, the sequence $\{\bm{\nu}^{(m)}\}_{m\in\mathbb{N}}$ is bounded and therefore has a limit point $\bm{\nu}^\star$. These conditions are sufficient to invoke \textcite[][Theorems 1 and 2]{Xu2017} and conclude that $\bm{\nu}^\star$ is a stationary point of $F(\bm{\nu})$ with $\lim_{m\to\infty}\bm{\nu}^{(m)}=\bm{\nu}^\star$. This last result follows from the fact that our coordinate descent scheme falls in \citeauthor{Xu2017}'s (\citeyear{Xu2017}) ``deterministic block prox-linear'' algorithmic framework.

Finally, we show the third part of the proof that $\bm{\nu}^\star$ is also a coordinate descent minimum point. Let $M$ be such that the sequence $\{\bm{\nu}^{(m)}\}_{m\geq M}$ has fixed active set $\mathcal{A}$. A fixed active set implies that (a) the active groups $k\in\mathcal{A}$ survive thresholding for all $m\geq M$, and (b) the inactive groups $k\not\in\mathcal{A}$ are zero after thresholding for all $m\geq M$. Two inequalities then follow from inspection of the thresholding operator \eqref{eq:threshold}, both holding for $m\geq M$:
\begin{equation}
\label{eq:thlower}
\|\bm{\nu}_k^{(m)}\|\geq\sqrt{\frac{2\lambda_{0k}}{\bar{c}_k}},\quad\text{for all }k\in\mathcal{A},
\end{equation}
and
\begin{equation}
\label{eq:thupper}
\frac{\|\nabla_kL(\bm{\nu}^{(m)})\|-\lambda_{1k}}{\bar{c}_k}<\sqrt{\frac{2\lambda_{0k}}{\bar{c}_k}},\quad\text{for all }k\not\in\mathcal{A}.
\end{equation}
If these inequalities were not satisfied the active set would change from one iteration to another, contradicting support stabilization. Now, observing that the left-hand sides in \eqref{eq:thlower} and \eqref{eq:thupper} are continuous in $\bm{\nu}_k$, we take the limits as $m\to\infty$ to get
\begin{equation*}
\|\bm{\nu}_k^\star\|\geq\sqrt{\frac{2\lambda_{0k}}{\bar{c}_k}},\quad\text{for all }k\in\mathcal{A},
\end{equation*}
and
\begin{equation*}
\frac{\|\nabla_kL(\bm{\nu}^\star)\|-\lambda_{1k}}{\bar{c}_k}<\sqrt{\frac{2\lambda_{0k}}{\bar{c}_k}},\quad\text{for all }k\not\in\mathcal{A}.
\end{equation*}
The above inequalities match those in Definition~\ref{def:cdmin}. This result in combination with $\bm{\nu}^\star$ being a stationary point gives that it is also a coordinate descent minimum point of $F(\bm{\nu})$.
\end{proof}

\subsection{Local search solver}
\label{app:ls}

Algorithm~\ref{alg:solver1} solves the local search problem \eqref{eq:ls} when $s=1$.
\begin{algorithm}[ht]
\caption{Solver for \eqref{eq:ls} when $s=1$}
\label{alg:solver1}
\DontPrintSemicolon
\SetKwInOut{Input}{input}
\SetKw{Break}{break}
\Input{$\bm{\nu}\in\mathbb{R}^{\sum_kp_k}$}
$\mathcal{A}\gets\gs(\bm{\nu})$ \;
\For{$k\in\mathcal{A}$}{
\For{$j\not\in\mathcal{A}$}{
$\bm{\nu}^{(j)}\gets\bm{\nu}$ \;
$\bm{\nu}_k^{(j)}\gets\mathbf{0}$ \;
$\bm{\nu}_j^{(j)}\gets\underset{\bm{\xi}\in\mathbb{R}^{p_k}}{\argmin}~F(\bm{\nu}_1^{(j)},\ldots,\bm{\nu}_{j-1}^{(j)},\bm{\xi},\bm{\nu}_{j+1}^{(j)},\ldots,\bm{\nu}_g^{(j)})$ \;
}
$j^\star\gets\underset{j\not\in\mathcal{A}}{\argmin}~F(\bm{\nu}^{(j)})$ \;
\If{$F(\bm{\nu}^{(j^\star)})<F(\bm{\nu})$}{
$\bm{\nu}\gets\bm{\nu}^{(j^\star)}$ \;
\Break \;
}
}
\Return{$\bm{\nu}$}
\end{algorithm}
The algorithm comprises two low-complexity loops: an outer loop over the active set and an inner loop over the inactive set. Within the inner loop, an active coordinate group is removed, and the objective is minimized in the directions of an inactive coordinate group. This minimization problem can be solved by iterating the thresholding operator~\eqref{eq:threshold}. Though these loops are typically quite fast to iterate through, Algorithm~\ref{alg:ls} can be further accelerated via gradient screening. In our implementation, rather than searching through all inactive groups in the inner loop, only the inactive groups whose gradients are among the largest 5\% are enumerated.

\subsection{Proof of Theorem~\ref{thrm:lsconverge}}
\label{app:thrmlsconverge}

The proof follows that of \textcite[][Theorem~4]{Hazimeh2020}.

\begin{proof}
Suppose Algorithm~\ref{alg:ls} has not terminated by iteration $M$ and has produced the sequence of iterates $\{\bm{\nu}^{(m)}\}_{m=1}^M$. The corresponding sequence of objective values $\{F(\bm{\nu}^{(m)})\}_{m=1}^M$ must then be strictly decreasing otherwise the algorithm would have terminated. From Theorem~\ref{thrm:cdconverge}, Algorithm~\ref{thrm:cdconverge} converges to a fixed support and minimizes a convex function over this support. Hence, all the iterates must have different support otherwise the objective values would not decrease. However, there are only finitely many possible supports, so Algorithm~\ref{alg:ls} must terminate in finitely many iterations. Now, denote the solution at termination by $\bm{\nu}^\star$. To see that $\bm{\nu}^\star$ is a local search minimum point, it suffices to observe that Algorithm~\ref{alg:ls} terminates only when the coordinate descent minimum point from Algorithm~\ref{alg:cd} cannot be improved by solving the local search problem \eqref{eq:ls}. Hence, the conditions of Definition~\ref{def:lsmin} are satisfied.
\end{proof}

\subsection{Regularization sequence}
\label{app:lambdaseq}

Proposition~\ref{prop:lambdaseq} describes our approach to computing the sequence of regularization parameters $\{\lambda_0^{(t)}\}_{t=1}^T$ given fixed $\lambda_1$. The approach extends an idea of \textcite{Hazimeh2020} for best subset.
\begin{proposition}
\label{prop:lambdaseq}
Suppose that $\hat{\bm{\nu}}^{(t)}$ is the result of running Algorithm~\ref{alg:cd} with $\lambda_0=\lambda_0^{(t)}$. Let $\mathcal{A}^{(t)}=\gs(\hat{\bm{\nu}}^{(t)})$ be the active set of groups. Then running Algorithm~\ref{alg:cd} initialized to $\hat{\bm{\nu}}^{(t)}$ and using $\lambda_0=\lambda_0^{(t+1)}$ where
\begin{equation*}
\lambda_0^{(t+1)}=\alpha\cdot\underset{k\not\in\mathcal{A}^{(t)}}{\max}\left(\frac{\left(\|\nabla_kL(\hat{\bm{\nu}}^{(t)})\|-\lambda_{1k}\right)_+^2}{2p_k\bar{c}_k}\right)
\end{equation*}
produces a solution $\hat{\bm{\nu}}^{(t+1)}$ such that $\hat{\bm{\nu}}^{(t+1)}\neq\hat{\bm{\nu}}^{(t)}$ for any $\alpha\in[0,1)$.
\begin{proof}
Under the conditions of Theorem~\ref{thrm:cdconverge}, Algorithm~\ref{alg:cd} is guaranteed to converge to a stationary point $\hat{\bm{\nu}}^{(t)}$ such that for all $k\not\in\mathcal{A}^{(t)}$ it holds
\begin{equation*}
\frac{\left(\|\nabla_kL(\hat{\bm{\nu}}^{(t)})\|-\lambda_{1k}\right)_+}{\bar{c}_k}<\sqrt{\frac{2\lambda_{0k}^{(t)}}{\bar{c}_k}}=\sqrt{\frac{2\lambda_0^{(t)}p_k}{\bar{c}_k}}.
\end{equation*}
Then initializing Algorithm~\ref{alg:cd} with $\hat{\bm{\nu}}^{(t)}$ and using $\lambda_0=\lambda_0^{(t+1)}$ such that
\begin{equation*}
\lambda_0^{(t+1)}<\underset{k\not\in\mathcal{A}^{(t)}}{\max}\left(\frac{\left(\|\nabla_kL(\hat{\bm{\nu}}^{(t)})\|-\lambda_{1k}\right)_+^2}{2p_k\bar{c}_k}\right)
\end{equation*}
leads to $\hat{\bm{\nu}}^{(t+1)}\neq\hat{\bm{\nu}}^{(t)}$.
\end{proof}
\end{proposition}

\subsection{Proximal gradient descent}
\label{app:proximal}

The constrained version of the group subset problem \eqref{eq:overshrinkgroupsubset} shifts the group subset regularizer from the objective into the set of constraints:
\begin{equation}
\label{eq:overshrinkgroupsubsetcon}
\underset{\substack{\bm{\beta}\in\mathbb{R}^p,\,\bar{\bm{\nu}}\in\mathcal{V} \\ \bm{\beta}=\sum_{k}\bar{\bm{\nu}}_k \\ \sum_kw_k1(\|\bar{\bm{\nu}}_k\|\neq0)\leq s}}{\min}\,\sum_i\ell\left(\mathbf{x}_i^\top\bm{\beta},y_i\right)+\sum_k\lambda_{1k}\|\bar{\bm{\nu}}_k\|,
\end{equation}
where $w_k>0$ is a scaling factor that can, e.g., account for varying group sizes, similar to $\lambda_{0k}$ in \eqref{eq:overshrinkgroupsubset}. Along the same lines as the problem restatement in Section~\ref{sec:reformulation}, we can rewrite \eqref{eq:overshrinkgroupsubsetcon} in a more computationally friendly format involving only the latent coefficients:
\begin{equation*}
\underset{\substack{\bm{\nu}\in\mathbb{R}^{\sum_kp_k} \\ \sum_kw_k1(\|\bm{\nu}_k\|\neq0)\leq s}}{\min}\,\sum_i\ell\left(\sum_k\mathbf{x}_{ik}^\top\bm{\nu}_k,y_i\right)+\sum_k\lambda_{1k}\|\bm{\nu}_k\|.
\end{equation*}
Unlike its penalized cousin, the constrained group subset problem above is not amenable to coordinate descent as it is not coordinate-wise separable. We instead propose a proximal gradient descent \parencite{Beck2017} scheme, which iterates the update
\begin{equation}
\label{eq:pgdupdate}
\bm{\nu}^{(m)}\gets\underset{\substack{\bm{\nu}\in\mathbb{R}^{\sum_k p_k} \\ \sum_kw_k1(\|\bm{\nu}_k\|\neq0)\leq s}}{\operatorname{\arg\,\min}}\frac{c}{2}\left\|\bm{\nu}-\left(\bm{\nu}^{(m-1)}-\frac{1}{c}\nabla L(\bm{\nu}^{(m-1)})\right)\right\|^2+\sum_k\lambda_{1k}\|\bm{\nu}_k\|.
\end{equation}
Provided the step size $c$ is taken sufficiently large, convergence of the sequence of iterates $\{\bm{\nu}^{(m)}\}_{m\in\mathbb{N}}$ to a stationary point is readily established using standard arguments; see, e.g., the proof of Proposition~2 in \textcite{Mazumder2023}. Specifically, $c$ should be greater than $\gamma^2$ for square loss or $\gamma^2/4$ for logistic loss, where $\gamma$ is the maximal eigenvalue of $\mathbf{X}^\top\mathbf{X}$. The iterates can be initialized at the output of our coordinate descent or local search algorithms by taking $\bm{\nu}^{(0)}$ as the solution from these algorithms with sparsity nearest to $s$, generally leading to fast convergence.

The difficulty of the minimization in \eqref{eq:pgdupdate} depends on whether the scalars $w_k$ in the group subset constraint vary across $k$. Suppose first $w_k=1$ for all $k=1,\ldots,g$ and let $\hat{\bm{\nu}}=\bm{\nu}^{(m-1)}-1/c\nabla L(\bm{\nu}^{(m-1)})$. Then it is not difficult to show a closed form solution $\bm{\nu}^\star$ exists whose elements are given by
\begin{equation*}
\bm{\nu}_k^\star=
\begin{dcases*}
\left(1-\frac{\lambda_{1k}}{c\|\hat{\bm{\nu}}_k\|}\right)_+\hat{\bm{\nu}}_k & if $k\in\{(1),\ldots,(s)\}$ \\
\bm{0} & otherwise,
\end{dcases*}
\end{equation*}
where $\{(1),\ldots,(g)\}$ is an ordering of $\{1,\ldots,g\}$ such that
\begin{equation*}
\left\|\left(1-\frac{\lambda_{1(1)}}{c\|\hat{\bm{\nu}}_{(1)}\|}\right)_+\hat{\bm{\nu}}_{(1)}\right\|\geq\cdots\geq\left\|\left(1-\frac{\lambda_{1(g)}}{c\|\hat{\bm{\nu}}_{(g)}\|}\right)_+\hat{\bm{\nu}}_{(g)}\right\|.
\end{equation*}
The thresholding operation above is similar to that for the penalized problem, except now only the top $s$ groups are retained.

When the scaling factor $w_k$ varies across groups, the minimization \eqref{eq:pgdupdate} is a more difficult combinatorial problem, specifically a knapsack problem \parencite{Kellerer2004}. No closed-form solution is available in this case, but the problem is solvable by existing algorithms. For example, we can use a branch-and-bound approach as implemented in an off-the-shelf combinatorial optimizer. We need only rewrite the problem as a mixed-integer program:
\begin{equation}
\label{eq:projection}
\begin{split}
\underset{\substack{\bm{\nu}\in\mathbb{R}^{\sum_k p_k} \\ \mathbf{z}\in\{0,1\}^g}}{\operatorname{\min}}\,&\frac{c}{2}\|\bm{\nu}-\hat{\bm{\nu}}\|^2+\sum_k\lambda_{1k}\|\bm{\nu}_k\| \\
\operatorname{s.t.}\,\;\;\;&\sum_kw_kz_k\leq s \\
&(\|\bm{\nu}_k\|,1-z_k):\operatorname{SOS-1},\quad k=1,\ldots,g.
\end{split}
\end{equation}
Here, $\mathbf{z}=(z_1,\ldots,z_g)^\top$ is a vector of zero-one variables that define the set of selected groups, i.e., $\bm{\nu}_k$ is nonzero only if $z_k=1$. This relationship between $\bm{\nu}_k$ and $z_k$ is captured via the specially ordered set (SOS) constraint, which has the effect that
\begin{equation*}
(\|\bm{\nu}_k\|,1-z_k):\operatorname{SOS-1}\implies\|\bm{\nu}_k\|(1-z_k)=0, 
\end{equation*}
thereby forcing $\bm{\nu}_k=\mathbf{0}$ when $z_k=0$.

\section{Error bounds}

\subsection{Proof of Theorem~\ref{thrm:groupsubsetbound}}
\label{app:thrmgroupsubsetbound}

The proof requires the following lemma.

\begin{lemma}
\label{lemma:subsettailbound}
Let $\delta\in(0,1]$. Let $\mathbf{X}\in\mathbb{R}^{n\times p}$ be a fixed matrix and $\bm{\varepsilon}\in\mathbb{R}^n$ be a $\mathcal{N}(\mathbf{0},\sigma^2\mathbf{I})$ random vector. Define $\bm{\theta}:=\sum_k\bar{\bm{\nu}}_k$ and the random event
\begin{equation*}
A_{\bar{\bm{\nu}}}:=\left\{|\bm{\varepsilon}^\top\mathbf{X}\bm{\theta}|\geq\|\mathbf{X}\bm{\theta}\|C_1\sigma\sqrt{sp_\mathrm{max}+s\log\left(\frac{g}{s}\right)+\log(\delta^{-1})}\right\}.
\end{equation*}
Then, for some numerical constant $C_1>0$, the probability of the union $\cup_{\bar{\bm{\nu}}\in\mathcal{V}(2s)}A_{\bar{\bm{\nu}}}$ is at most $\delta$.
\end{lemma}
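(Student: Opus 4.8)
The plan is to exploit the fact that the event $A_{\bar{\bm{\nu}}}$ is invariant under rescaling of $\bm{\theta}$ and depends on $\bar{\bm{\nu}}$ only through its set of active groups. This collapses the uncountable union over $\mathcal{V}(2s)$ into a finite union over group supports, each of which can be controlled by a single chi-squared tail bound.

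First I would fix a support $S\subseteq\{1,\ldots,g\}$ with $|S|\le 2s$ and introduce the linear subspace $V_S:=\{\mathbf{X}\bm{\theta}:\bm{\theta}=\sum_{k\in S}\bm{\nu}^{(k)}\}\subseteq\mathbb{R}^n$, which is spanned by the columns of $\mathbf{X}$ indexed by $\cup_{k\in S}\mathcal{G}_k$. As $\bar{\bm{\nu}}$ ranges over all decompositions with active groups contained in $S$, the vector $\mathbf{X}\bm{\theta}$ ranges over all of $V_S$, and since $A_{\bar{\bm{\nu}}}$ is homogeneous of degree zero in $\mathbf{X}\bm{\theta}$, the sub-union $\cup_{\bar{\bm{\nu}}:\,\mathrm{supp}\subseteq S}A_{\bar{\bm{\nu}}}$ occurs exactly when
\[
\sup_{\mathbf{u}\in V_S\setminus\{\mathbf{0}\}}\frac{|\bm{\varepsilon}^\top\mathbf{u}|}{\|\mathbf{u}\|_2}=\|P_{V_S}\bm{\varepsilon}\|_2\ge C_1\sigma\sqrt{sp_\mathrm{max}+s\log(g/s)+\log(\delta^{-1})},
\]
where $P_{V_S}$ is the orthogonal projection onto $V_S$ and the equality is the usual variational characterization of the projection norm. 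Crucially $\dim(V_S)\le\sum_{k\in S}p_k\le 2sp_\mathrm{max}$, and any rank deficiency arising from overlapping groups only lowers this dimension, which works in our favor.

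Next, because $\bm{\varepsilon}\sim\mathcal{N}(\mathbf{0},\sigma^2\mathbf{I})$, the quantity $\|P_{V_S}\bm{\varepsilon}\|_2^2/\sigma^2$ is chi-squared with $d_S:=\dim(V_S)\le 2sp_\mathrm{max}$ degrees of freedom. I would invoke the Laurent--Massart concentration bound, $\p(\chi^2_d\ge d+2\sqrt{dt}+2t)\le e^{-t}$, and use $2\sqrt{dt}\le d+t$ to obtain $\|P_{V_S}\bm{\varepsilon}\|_2^2\le C\sigma^2(d_S+t)$ off an event of probability at most $e^{-t}$. A union bound over all admissible supports then demands $t=\log N+\log(\delta^{-1})$, where $N$ is the number of supports; the counting bound $N\le\sum_{j=0}^{2s}\binom{g}{j}\le(eg/(2s))^{2s}$ gives $\log N\le 2s+2s\log(g/s)$, so the total failure probability is at most $Ne^{-t}=\delta$.

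Finally I would calibrate the constant: substituting $t=\log N+\log(\delta^{-1})$ and $d_S\le 2sp_\mathrm{max}$, and absorbing the additive $2s$ term into $sp_\mathrm{max}$ (valid since $p_\mathrm{max}\ge 1$), yields $\|P_{V_S}\bm{\varepsilon}\|_2^2\le C_1^2\sigma^2(sp_\mathrm{max}+s\log(g/s)+\log(\delta^{-1}))$ for a suitable numerical $C_1$, which is precisely the negation of the threshold defining the sub-union over $S$; taking the union over all $S$ completes the argument. I expect the main obstacle to be the reduction in the second paragraph, namely arguing cleanly that the supremum over the continuum of latent decompositions $\bar{\bm{\nu}}$ with active groups in $S$ coincides with a supremum of a linear functional over the fixed subspace $V_S$, so that the projection identity applies and the counting bound reproduces all three terms of the stated rate without degrading the constant. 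The remaining manipulations are routine.
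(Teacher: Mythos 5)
Your proof is correct, and while it shares the paper's overall skeleton---use scale-invariance to collapse the uncountable union over $\mathcal{V}(2s)$ into a finite union over group supports, then apply Boole's inequality over the at most $(eg/(2s))^{2s}$ supports---it handles the key step within each support differently. The paper controls $\sup_{\mathbf{u}}|\bm{\varepsilon}^\top\mathbf{U}_\mathcal{A}\mathbf{u}|$ by discretizing the unit ball with a $1/2$-net of cardinality at most $6^{2sp_\mathrm{max}}$ and applying a scalar Gaussian tail bound to each net point, so the $sp_\mathrm{max}$ term in the rate arises from the metric entropy of the ball. You instead observe that this supremum equals $\|P_{V_S}\bm{\varepsilon}\|_2$ exactly, whose square is $\sigma^2\chi^2_{d_S}$ with $d_S\leq 2sp_\mathrm{max}$, and invoke Laurent--Massart, so the $sp_\mathrm{max}$ term arises from the degrees of freedom. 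Your route is cleaner (no discretization error, no extra factor from the net cardinality) and has the aesthetic advantage of unifying this lemma with the paper's proof of its companion tail bound for the shrinkage estimator (Lemma \ref{lemma:lassotailbound}), which already uses the SVD-plus-chi-squared device; what the paper's net argument buys is only that it avoids identifying the exact distribution of the supremum, which is unnecessary here since the projection of an isotropic Gaussian is so well understood. The only point worth tightening in a write-up is the sentence claiming the sub-union over a fixed support occurs ``exactly when'' the projection norm exceeds the threshold: you only need the forward implication (any $A_{\bar{\bm{\nu}}}$ with $\bm{\theta}\neq\mathbf{0}$ and support in $S$ forces $\|P_{V_S}\bm{\varepsilon}\|_2$ past the threshold), and you should, as the paper does implicitly, set aside the degenerate $\bm{\theta}=\mathbf{0}$ case for which the event is vacuous in the downstream application.
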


\begin{proof}
For $\mathcal{A}\subseteq\{1,\ldots,g\}$, a set of active groups, denote by $\mathcal{S}_\mathcal{A}:=\cup_{k\in\mathcal{A}}\mathcal{G}_k$ the set of active predictors. Denote the singular value decomposition of $\mathbf{X}_\mathcal{A}$ by $\mathbf{U}_\mathcal{A}\mathbf{D}_\mathcal{A}\mathbf{V}_\mathcal{A}^\top$, where $\mathbf{X}_\mathcal{A}$ are the columns of $\mathbf{X}$ indexed by $\mathcal{S}_\mathcal{A}$. Define the set of unit vectors $\mathcal{B}_2^r:=\{\mathbf{u}\in\mathbb{R}^r:\|\mathbf{u}\|\leq1\}$, and the set of $s$ group-sparse subsets $\mathcal{P}(s):=\{\mathcal{A}\subseteq\{1,\ldots,g\}:|\mathcal{A}|=s\}$. For all $\bar{\bm{\nu}}\in\mathcal{V}(2s)$ such that $\bm{\theta}\neq\bm{0}$, it holds
\begin{equation*}
\frac{|\bm{\varepsilon}^\top\mathbf{X}\bm{\theta}|}{\|\mathbf{X}\bm{\theta}\|}=\frac{|\bm{\varepsilon}^\top\mathbf{U}_\mathcal{A}\mathbf{D}_\mathcal{A}\mathbf{V}_\mathcal{A}^\top\bm{\theta}_\mathcal{A}|}{\|\mathbf{D}_\mathcal{A}\mathbf{V}_\mathcal{A}^\top\bm{\theta}_\mathcal{A}\|}\leq\max_{\mathcal{A}\in\mathcal{P}(2s)}\sup_{\mathbf{u}\in\mathcal{B}_2^{|\mathcal{S}_\mathcal{A}|}}|\bm{\varepsilon}^\top\mathbf{U}_\mathcal{A}\mathbf{u}|.
\end{equation*}
For any $t\in\mathbb{R}$, this inequality implies
\begin{equation*}
\p\left(\cup_{\bar{\bm{\nu}}\in\mathcal{V}(2s)}\left\{|\bm{\varepsilon}^\top\mathbf{X}\bm{\theta}|\geq\|\mathbf{X}\bm{\theta}\|t\right\}\right)\leq\p\left(\max_{\mathcal{A}\in\mathcal{P}(2s)}\sup_{\mathbf{u}\in\mathcal{B}_2^{|\mathcal{S}_\mathcal{A}|}}|\bm{\varepsilon}^\top\mathbf{U}_\mathcal{A}\mathbf{u}|\geq t\right).
\end{equation*}
Applying Boole's inequality to the right-hand side yields
\begin{equation*}
\p\left(\max_{\mathcal{A}\in\mathcal{P}(2s)}\sup_{\mathbf{u}\in\mathcal{B}_2^{|\mathcal{S}_\mathcal{A}|}}|\bm{\varepsilon}^\top\mathbf{U}_\mathcal{A}\mathbf{u}|\geq t\right)\leq\sum_{\mathcal{A}\in\mathcal{P}(2s)}\p\left(\sup_{\mathbf{u}\in\mathcal{B}_2^{|\mathcal{S}_\mathcal{A}|}}|\bm{\varepsilon}^\top\mathbf{U}_\mathcal{A}\mathbf{u}|\geq t\right).
\end{equation*}
We bound the supremum over $\mathcal{B}_2^{|\mathcal{S}_\mathcal{A}|}$ using an $\epsilon$-net argument. Let $\mathcal{E}^{|\mathcal{S}_\mathcal{A}|}$ be an $\epsilon$-net of $\mathcal{B}_2^{|\mathcal{S}_\mathcal{A}|}$ with respect to $l_2$-norm that satisfies $|\mathcal{E}^{|\mathcal{S}_\mathcal{A}|}|\leq(3/\epsilon)^{|\mathcal{S}_\mathcal{A}|}$. Such an $\mathcal{E}^{|\mathcal{S}_\mathcal{A}|}$ is guaranteed to exist for $\epsilon\in(0,1)$ \parencite[Lemma~1.18]{Rigollet2015}. Setting $\epsilon=1/2$, it holds for any $\mathcal{A}\in\mathcal{P}(2s)$ and any $\mathbf{z}\in\mathcal{E}^{|\mathcal{S}_\mathcal{A}|}$
\begin{equation*}
\sup_{\mathbf{u}\in\mathcal{B}_2^{|\mathcal{S}_\mathcal{A}|}}|\bm{\varepsilon}^\top\mathbf{U}_\mathcal{A}\mathbf{u}|\leq2\sup_{\mathbf{z}\in\mathcal{E}^{|\mathcal{S}_\mathcal{A}|}}|\bm{\varepsilon}^\top\mathbf{U}_\mathcal{A}\mathbf{z}|.
\end{equation*}
See the proof of Theorem~1.19 in \textcite{Rigollet2015} for a derivation of this bound. Applying Boole's inequality to the bound yields
\begin{equation*}
\sum_{\mathcal{A}\in\mathcal{P}(2s)}\p\left(2\sup_{\mathbf{z}\in\mathcal{E}^{|\mathcal{S}_\mathcal{A}|}}|\bm{\varepsilon}^\top\mathbf{U}_\mathcal{A}\mathbf{z}|\geq t\right)\leq\sum_{\mathcal{A}\in\mathcal{P}(2s)}\sum_{\mathbf{z}\in\mathcal{E}^{|\mathcal{S}_\mathcal{A}|}}\p\left(2|\bm{\varepsilon}^\top\mathbf{U}_\mathcal{A}\mathbf{z}|\geq t\right).
\end{equation*}
The cardinality of $\mathcal{P}(2s)$ satisfies $|\mathcal{P}(2s)|=\binom{g}{2s}\leq(eg/(2s))^{2s}$. For any $\mathcal{A}\in\mathcal{P}(2s)$, the cardinality of $\mathcal{E}^{|\mathcal{S}_\mathcal{A}|}$ satisfies $|\mathcal{E}^{|\mathcal{S}_\mathcal{A}|}|\leq6^{|\mathcal{S}_\mathcal{A}|}\leq6^{2sp_\mathrm{max}}$. Since $\mathbf{U}_\mathcal{A}$ is orthonormal and $\mathbf{z}$ has unit length, the random variable $\bm{\varepsilon}^\top\mathbf{U}_\mathcal{A}\mathbf{z}\sim\mathcal{N}(0,\sigma^2)$. Using a standard Gaussian tail bound \parencite[Lemma~1.4]{Rigollet2015}, we have
\begin{equation*}
\p\left(2|\bm{\varepsilon}^\top\mathbf{U}_\mathcal{A}\mathbf{z}|\geq t\right)\leq2\exp\left(-\frac{t^2}{8\sigma^2}\right).
\end{equation*}
It follows from the chain of inequalities above
\begin{equation*}
\p\left(\cup_{\bar{\bm{\nu}}\in\mathcal{V}(2s)}\left\{|\bm{\varepsilon}^\top\mathbf{X}\bm{\theta}|\geq\|\mathbf{X}\bm{\theta}\|t\right\}\right)\leq2\exp\left(-\frac{t^2}{8\sigma^2}+2sp_\mathrm{max}\log(6)+2s\log\left(\frac{eg}{2s}\right)\right).
\end{equation*}
Setting $t\geq\sqrt{8\sigma^2[\log(2)+2sp_\mathrm{max}\log(6)+2s\log(eg/(2s))+\log(\delta^{-1})]}$ concludes the proof.
\end{proof}

We are now ready to prove Theorem~\ref{thrm:groupsubsetbound}.

\begin{proof}
Take any $\bar{\bm{\nu}}\in\mathcal{V}(s)$ and any $\bm{\beta}\in\mathbb{R}^p$ such that $\bm{\beta}=\sum_k\bar{\bm{\nu}}_k$. Optimality of $\hat{\bar{\bm{\nu}}}$ and $\hat{\bm{\beta}}=\sum_k\hat{\bar{\bm{\nu}}}_k$ implies
\begin{equation*}
\frac{1}{n}\|\mathbf{y}-\mathbf{X}\hat{\bm{\beta}}\|^2\leq\frac{1}{n}\|\mathbf{y}-\mathbf{X}\bm{\beta}\|^2,
\end{equation*}
which, after some algebra, leads to
\begin{equation*}
\frac{1}{n}\|\mathbf{f}^0-\mathbf{X}\hat{\bm{\beta}}\|^2\leq\frac{1}{n}\|\mathbf{f}^0-\mathbf{X}\bm{\beta}\|^2+\frac{2}{n}|\bm{\varepsilon}^\top\mathbf{X}(\hat{\bm{\beta}}-\bm{\beta})|.
\end{equation*}
Observe $\hat{\bm{\beta}}-\bm{\beta}=\sum_k(\hat{\bar{\bm{\nu}}}_k-\bar{\bm{\nu}}_k)$, with at most $2s$ components of $(\hat{\bar{\bm{\nu}}}_{1}-\bar{\bm{\nu}}_1,\ldots,\hat{\bar{\bm{\nu}}}_g-\bar{\bm{\nu}}_g)$ not equal to $\mathbf{0}$. An application of Lemma~\ref{lemma:subsettailbound} thus yields the high-probability upper bound
\begin{equation*}
\begin{split}
\frac{2}{n}|\bm{\varepsilon}^\top\mathbf{X}(\hat{\bm{\beta}}-\bm{\beta})|&\leq\frac{2}{n}\|\mathbf{X}(\hat{\bm{\beta}}-\bm{\beta})\|C_1\sigma\sqrt{sp_\mathrm{max}+s\log\left(\frac{g}{s}\right)+\log(\delta^{-1})} \\
&\leq\frac{2}{n}\left(\|\mathbf{f}^0-\mathbf{X}\hat{\bm{\beta}}\|+\|\mathbf{f}^0-\mathbf{X}\bm{\beta}\|\right)C_1\sigma\sqrt{sp_\mathrm{max}+s\log\left(\frac{g}{s}\right)+\log(\delta^{-1})},
\end{split}
\end{equation*}
where the last line follows from Minkowski's inequality for $l_p$-norms. Using Young's inequality ($2ab\leq\alpha a^2+\alpha^{-1} b^2$ for $\alpha>0$), the first term on the right-hand side is bounded as
\begin{equation*}
\begin{split}
&\frac{2}{n}\|\mathbf{f}^0-\mathbf{X}\hat{\bm{\beta}}\|C_1\sigma\sqrt{sp_\mathrm{max}+s\log\left(\frac{g}{s}\right)+\log(\delta^{-1})} \\
&\hspace{2in}\leq\frac{\alpha}{n}\|\mathbf{f}^0-\mathbf{X}\hat{\bm{\beta}}\|^2+\frac{C_1^2\sigma^2}{\alpha n}\left[sp_\mathrm{max}+s\log\left(\frac{g}{s}\right)+\log(\delta^{-1})\right].
\end{split}
\end{equation*}
A bound for the second term on the right-hand side follows similarly. Putting the results together and rearranging terms, we arrive at
\begin{equation*}
\frac{1}{n}\|\mathbf{f}^0-\mathbf{X}\hat{\bm{\beta}}\|^2\leq\frac{1+\alpha}{(1-\alpha)n}\|\mathbf{f}^0-\mathbf{X}\bm{\beta}\|^2+\frac{2C_1^2\sigma^2}{\alpha(1-\alpha)n}\left[sp_\mathrm{max}+s\log\left(\frac{g}{s}\right)+\log(\delta^{-1})\right],
\end{equation*}
holding with probability at least $1-\delta$ for $\alpha\in(0,1)$. Taking $C\geq2C_1^2$ completes the proof.
\end{proof}

\subsection{Proof of Theorem~\ref{thrm:groupslowbound}}
\label{app:thrmgroupslowbound}

The proof requires the following lemma.

\begin{lemma}
\label{lemma:lassotailbound}
Let $\delta\in(0,1]$. Let $\mathbf{X}\in\mathbb{R}^{n\times p}$ be a fixed matrix and $\bm{\varepsilon}\in\mathbb{R}^n$ be a $\mathcal{N}(\mathbf{0},\sigma^2\mathbf{I})$ random vector. Let $\gamma_k$ be the maximal eigenvalue of $\mathbf{X}_k^\top\mathbf{X}_k/n$, where $\mathbf{X}_k$ is the submatrix of $\mathbf{X}$ corresponding to group $k$. Define the random event
\begin{equation*}
A_k=\left\{\|\mathbf{X}_k^\top\bm{\varepsilon}\|\geq\sqrt{n\gamma_k}\sigma\sqrt{p_k+2\sqrt{p_k\log(g)+p_k\log(\delta^{-1})}+2\log(g)+2\log(\delta^{-1})}\right\}.
\end{equation*}
Then the probability of the union $\cup_kA_k$ is at most $\delta$.
\begin{proof}
Denote the singular value decomposition of $\mathbf{X}_k$ by $\mathbf{U}_k\mathbf{D}_k\mathbf{V}_k^\top$. Using the properties of the operator norm, it holds
\begin{equation*}
\|\mathbf{X}_k^\top\bm{\varepsilon}\|=\|\mathbf{V}_k\mathbf{D}_k\mathbf{U}_k^\top\bm{\varepsilon}\|\leq\sqrt{n\gamma_k}\|\mathbf{U}_k^\top\bm{\varepsilon}\|. 
\end{equation*}
For any $t\in\mathbb{R}$, this inequality implies
\begin{equation*}
\p\left(\cup_k\left\{\|\mathbf{X}_k^\top\bm{\varepsilon}\|\geq\sqrt{n\gamma_k}t\right\}\right)\leq\p\left(\cup_k\left\{\|\mathbf{U}_k^\top\bm{\varepsilon}\|\geq t\right\}\right).
\end{equation*}
Applying Boole's inequality to the right-hand side yields
\begin{equation*}
\p\left(\cup_k\left\{\|\mathbf{U}_k^\top\bm{\varepsilon}\|\geq t\right\}\right)\leq\sum_k\p\left(\|\mathbf{U}_k^\top\bm{\varepsilon}\|\geq t\right).
\end{equation*}
Since $\mathbf{U}_k$ is orthonormal, the random variable $\|\mathbf{U}_k^\top\bm{\varepsilon}\|^2/\sigma^2\sim\chi^2(p_k)$. Using a standard chi-squared tail bound \parencite[Lemma~1]{Laurent2000}, we have for $t=p_k+\sqrt{2p_kx}+2x$ and $x>0$
\begin{equation*}
\p\left(\|\mathbf{U}_k^\top\bm{\varepsilon}\|\geq\sigma\sqrt{t}\right)\leq\exp(-x).
\end{equation*}
It follows from the chain of inequalities above
\begin{equation*}
\p\left(\cup_k\left\{\|\mathbf{X}_k^\top\bm{\varepsilon}\|\geq\sqrt{n\gamma_k}\sigma\sqrt{p_k+\sqrt{2p_kx}+2x}\right\}\right)\leq\exp(-x+\log(g)).
\end{equation*}
Setting $x\geq\log(g)+\log(\delta^{-1})$ concludes the proof.
\end{proof}
\end{lemma}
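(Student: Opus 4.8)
The plan is to handle each of the $g$ groups separately, reduce the norm $\|\mathbf{X}_k^\top\bm{\varepsilon}\|_2$ to a pure Gaussian quadratic form, bound its upper tail by a standard chi-squared deviation inequality, and then recombine the groups by a union bound with a deviation level calibrated to absorb the factor $g$.

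First I would pass to the singular value decomposition $\mathbf{X}_k=\mathbf{U}_k\mathbf{D}_k\mathbf{V}_k^\top$, where $\mathbf{U}_k$ and $\mathbf{V}_k$ have orthonormal columns. Since $\gamma_k$ is the maximal eigenvalue of $\mathbf{X}_k^\top\mathbf{X}_k/n$, the largest singular value of $\mathbf{X}_k$ is $\sqrt{n\gamma_k}$, so $\|\mathbf{D}_k\|_{\mathrm{op}}=\sqrt{n\gamma_k}$. Using that $\mathbf{V}_k$ preserves $l_2$-norm and the operator-norm inequality, $\|\mathbf{X}_k^\top\bm{\varepsilon}\|_2=\|\mathbf{V}_k\mathbf{D}_k\mathbf{U}_k^\top\bm{\varepsilon}\|_2\leq\sqrt{n\gamma_k}\,\|\mathbf{U}_k^\top\bm{\varepsilon}\|_2$. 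Writing $t_k:=p_k+2\sqrt{p_k x}+2x$ with $x:=\log(g)+\log(\delta^{-1})$, this containment shows $A_k\subseteq\{\|\mathbf{U}_k^\top\bm{\varepsilon}\|_2\geq\sigma\sqrt{t_k}\}$, so it suffices to bound the probability of the reduced event for each $k$.

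Next I would identify the law of the reduced quantity. Because $\mathbf{U}_k$ has orthonormal columns and $\bm{\varepsilon}\sim\mathcal{N}(\mathbf{0},\sigma^2\mathbf{I})$, the projection $\mathbf{U}_k^\top\bm{\varepsilon}$ is Gaussian with covariance $\sigma^2\mathbf{I}$, so $\|\mathbf{U}_k^\top\bm{\varepsilon}\|_2^2/\sigma^2$ is chi-squared distributed. Applying the Laurent--Massart bound \parencite[Lemma 1]{Laurent2000} in the form $\p(\chi^2(p_k)\geq p_k+2\sqrt{p_k x}+2x)\leq e^{-x}$ gives $\p(A_k)\leq e^{-x}$ for every $k$. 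A union bound over $k=1,\ldots,g$ then yields $\p(\cup_{k=1}^g A_k)\leq g\,e^{-x}$, and the choice $x=\log(g)+\log(\delta^{-1})$ collapses the right-hand side to exactly $\delta$; substituting $x$ into $t_k$ reproduces the threshold displayed in the statement.

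The one genuinely delicate point is the overlapping-group setting emphasized by the paper, where $\mathbf{X}_k$ need not have full column rank. If $r_k:=\mathrm{rank}(\mathbf{X}_k)<p_k$, then $\mathbf{U}_k$ carries only $r_k$ columns and $\|\mathbf{U}_k^\top\bm{\varepsilon}\|_2^2/\sigma^2\sim\chi^2(r_k)$ rather than $\chi^2(p_k)$. I would dispatch this by the stochastic dominance $\chi^2(r_k)\preceq\chi^2(p_k)$, which keeps the tail bound with degrees of freedom $p_k$ valid, merely conservative. No step requires sharp constants, so the argument reduces to a clean SVD-based reduction followed by two standard Gaussian/chi-squared tail inequalities.
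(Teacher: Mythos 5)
Your proof is correct and takes essentially the same route as the paper's: the identical SVD reduction $\|\mathbf{X}_k^\top\bm{\varepsilon}\|_2\leq\sqrt{n\gamma_k}\|\mathbf{U}_k^\top\bm{\varepsilon}\|_2$, the same Laurent--Massart chi-squared tail bound, and the same union bound with $x=\log(g)+\log(\delta^{-1})$. Your explicit treatment of the rank-deficient case (noting that $\|\mathbf{U}_k^\top\bm{\varepsilon}\|_2^2/\sigma^2\sim\chi^2(r_k)$ with $r_k<p_k$ when groups overlap, and invoking stochastic dominance $\chi^2(r_k)\preceq\chi^2(p_k)$) is a welcome refinement that the paper's proof silently elides, and your deviation term $2\sqrt{p_kx}$ matches the lemma statement and the correct form of the Laurent--Massart inequality, whereas the paper's proof contains a typo ($\sqrt{2p_kx}$).
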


We are now ready to prove Theorem~\ref{thrm:groupslowbound}.

\begin{proof}
For any $\bar{\bm{\nu}}\in\mathcal{V}(s)$ and any $\bm{\beta}\in\mathbb{R}^p$ such that $\bm{\beta}=\sum_k\bar{\bm{\nu}}_k$, we have
\begin{equation*}
\frac{1}{n}\|\mathbf{y}-\mathbf{X}\hat{\bm{\beta}}\|^2+2\sum_k\lambda_k\|\hat{\bar{\bm{\nu}}}_k\|\leq\frac{1}{n}\|\mathbf{y}-\mathbf{X}\bm{\beta}\|^2+2\sum_k\lambda_k\|\bar{\bm{\nu}}_k\|,
\end{equation*}
which leads to
\begin{equation}
\label{eq:basicinequality}
\frac{1}{n}\|\mathbf{f}^0-\mathbf{X}\hat{\bm{\beta}}\|^2\leq\frac{1}{n}\|\mathbf{f}^0-\mathbf{X}\bm{\beta}\|^2+\frac{2}{n}|\bm{\varepsilon}^\top\mathbf{X}(\hat{\bm{\beta}}-\bm{\beta})|+2\sum_k\lambda_k\left(\|\bar{\bm{\nu}}_k\|-\|\hat{\bar{\bm{\nu}}}_k\|\right).
\end{equation}
The Cauchy-Schwarz inequality and Minkowski's inequality are applied in turn to get
\begin{equation*}
\frac{2}{n}|\bm{\varepsilon}^\top\mathbf{X}(\hat{\bm{\beta}}-\bm{\beta})|\leq\frac{2}{n}\sum_k\|\mathbf{X}_k^\top\bm{\varepsilon}\|\|\hat{\bar{\bm{\nu}}}_k-\bar{\bm{\nu}}_k\|\leq\frac{2}{n}\sum_k\|\mathbf{X}_k^\top\bm{\varepsilon}\|(\|\hat{\bar{\bm{\nu}}}_k\|+\|\bar{\bm{\nu}}_k\|).
\end{equation*}
Applying Lemma~\ref{lemma:lassotailbound}, and using the assumed lower bound on $\lambda_k$, yields
\begin{equation*}
\frac{2}{n}\sum_k\|\mathbf{X}_k^\top\bm{\varepsilon}\|(\|\hat{\bar{\bm{\nu}}}_k\|+\|\bar{\bm{\nu}}_k\|)\leq2\sum_k\lambda_k(\|\hat{\bar{\bm{\nu}}}_k\|+\|\bar{\bm{\nu}}_k\|)
\end{equation*}
with high-probability. Plugging this bound into \eqref{eq:basicinequality}, we arrive at
\begin{equation*}
\frac{1}{n}\|\mathbf{f}^0-\mathbf{X}\hat{\bm{\beta}}\|^2\leq\frac{1}{n}\|\mathbf{f}^0-\mathbf{X}\bm{\beta}\|^2+4\sum_k\lambda_k\|\bar{\bm{\nu}}_k\|,
\end{equation*}
holding with probability at least $1-\delta$.
\end{proof}

\subsection{Proof of Theorem~\ref{thrm:groupfastbound}}
\label{app:thrmgroupfastbound}

\begin{proof}
Begin with inequality~\eqref{eq:basicinequality}. First, we bound the term $2/n|\bm{\varepsilon}^\top\mathbf{X}(\hat{\bm{\beta}}-\bm{\beta})|$. Lemma~\ref{lemma:subsettailbound} gives the high-probability upper bound
\begin{equation*}
\begin{split}
\frac{2}{n}|\bm{\varepsilon}^\top\mathbf{X}(\hat{\bm{\beta}}-\bm{\beta})|&\leq\frac{2}{n}\|\mathbf{X}(\hat{\bm{\beta}}-\bm{\beta})\|C_1\sigma\sqrt{sp_\mathrm{max}+s\log\left(\frac{g}{s}\right)+\log(\delta^{-1})} \\
&\leq\frac{2}{n}\left(\|\mathbf{f}^0-\mathbf{X}\hat{\bm{\beta}}\|+\|\mathbf{f}^0-\mathbf{X}\bm{\beta}\|\right)C_1\sigma\sqrt{sp_\mathrm{max}+s\log\left(\frac{g}{s}\right)+\log(\delta^{-1})}.
\end{split}
\end{equation*}
Using Young's inequality ($2ab\leq\alpha/2a^2+2/\alpha b^2$), the first term on the right-hand side is bounded as
\begin{equation*}
\begin{split}
&\frac{2}{n}\|\mathbf{f}^0-\mathbf{X}\hat{\bm{\beta}}\|C_1\sigma\sqrt{sp_\mathrm{max}+s\log\left(\frac{g}{s}\right)+\log(\delta^{-1})} \\
&\hspace{1.5in}\leq\frac{\alpha}{2n}\|\mathbf{f}^0-\mathbf{X}\hat{\bm{\beta}}\|^2+\frac{2C_1^2\sigma^2}{\alpha n}\left[sp_\mathrm{max}+s\log\left(\frac{g}{s}\right)+\log(\delta^{-1})\right].
\end{split}
\end{equation*}
The second term on the right-hand side is bounded similarly. Now, we bound the remaining term $2\sum_k\lambda_k(\|\bar{\bm{\nu}}_k\|-\|\hat{\bar{\bm{\nu}}}_k\|)$ in \eqref{eq:basicinequality}. Minkowski's inequality and Assumption~\ref{asmn:sparseeigen} give
\begin{equation*}
2\sum_k\lambda_k(\|\bar{\bm{\nu}}_k\|-\|\hat{\bar{\bm{\nu}}}_k\|)\leq2\lambda_\mathrm{max}\sum_k\|\bar{\bm{\nu}}_k-\hat{\bar{\bm{\nu}}}_k\|\leq\frac{2\lambda_\mathrm{max}}{\sqrt{n}\phi_{2s}}\|\mathbf{X}(\bm{\beta}-\hat{\bm{\beta}})\|.
\end{equation*}
Using Minkowski's inequality again, we have
\begin{equation*}
\frac{2\lambda_\mathrm{max}}{\sqrt{n}\phi_{2s}}\|\mathbf{X}(\bm{\beta}-\hat{\bm{\beta}})\|\leq\frac{2\lambda_\mathrm{max}}{\sqrt{n}\phi_{2s}}\left(\|\mathbf{f}^0-\mathbf{X}\hat{\bm{\beta}}\|+\|\mathbf{f}^0-\mathbf{X}\bm{\beta}\|\right).
\end{equation*}
Two applications of Young's inequality yields
\begin{equation*}
\frac{2\lambda_\mathrm{max}}{\sqrt{n}\phi_{2s}}\left(\|\mathbf{f}^0-\mathbf{X}\hat{\bm{\beta}}\|+\|\mathbf{f}^0-\mathbf{X}\bm{\beta}\|\right)\leq\frac{4\lambda_\mathrm{max}^2}{\alpha\phi_{2s}^2}+\frac{\alpha}{2n}\|\mathbf{f}^0-\mathbf{X}\hat{\bm{\beta}}\|^2+\frac{\alpha}{2n}\|\mathbf{f}^0-\mathbf{X}\bm{\beta}\|^2.
\end{equation*}
Finally, putting the bounds together and simplifying the resulting expression, we have
\begin{equation*}
\begin{split}
&\frac{1}{n}\|\mathbf{f}^0-\mathbf{X}\hat{\bm{\beta}}\|^2\leq\frac{1+\alpha}{(1-\alpha)n}\|\mathbf{f}^0-\mathbf{X}\bm{\beta}\|^2+\frac{4C_1^2\sigma^2}{\alpha(1-\alpha)n}\left[sp_\mathrm{max}+s\log\left(\frac{g}{s}\right)+\log(\delta^{-1})\right] \\
&\hspace{4.5in}+\frac{4\lambda_\mathrm{max}^2}{\alpha(1-\alpha)\phi_{2s}^2},
\end{split}
\end{equation*}
holding with probability at least $1-\delta$ for $\alpha\in(0,1)$.
\end{proof}

\section{Simulations}

\subsection{Tuning parameters}
\label{app:tuning}

The range of tuning parameters for each estimator is:
\begin{itemize}
\item Group subset: a grid of $\lambda_0$ chosen adaptively using the method of Proposition~\ref{prop:lambdaseq}, where the first $\lambda_0$ sets all coefficients to zero;
\item Group subset+shrinkage: a grid of $\lambda_1$ containing logarithmically spaced points between $\lambda_1^\mathrm{max}$ and $\lambda_1^\mathrm{min}=10^{-4}\lambda_1^\mathrm{max}$, where $\lambda_1^\mathrm{max}$ is the smallest value that sets all coefficients to zero, and for each value of $\lambda_1$, a grid of $\lambda_0$ chosen as above;
\item Group lasso: a grid of $\lambda$ containing logarithmically spaced points between $\lambda^\mathrm{max}$ and $\lambda^\mathrm{min}=10^{-4}\lambda^\mathrm{max}$, where $\lambda^\mathrm{max}$ is the smallest value that sets all coefficients to zero;
\item Group SCAD: the same grid of $\lambda$ as above, and for each value of $\lambda$, a grid of the nonconvexity parameter $\gamma$ containing logarithmically spaced points between $\gamma^\mathrm{max}=100$ and $\gamma^\mathrm{min}=2+10^{-4}$; and
\item Group MCP: the same grid of $\lambda$ as above, and for each value of $\lambda$, a grid of the nonconvexity parameter $\gamma$ containing logarithmically spaced points between $\gamma^\mathrm{max}=100$ and $\gamma^\mathrm{min}=1+10^{-4}$.
\end{itemize}
Grids of 100 points are used for the primary tuning parameters ($\lambda_0$, $\lambda$) and grids of 30 points for the secondary tuning parameters ($\lambda_1$, $\gamma$).

\subsection{Comparisons with \texttt{gamsel}}
\label{app:gamsel}

The computational performance of \texttt{grpsel} is compared with \texttt{gamsel} \parencite{Chouldechova2015}, an \texttt{R} package dedicated solely to sparse semiparametric modeling via group lasso. The setup is the same as Section~\ref{sec:simulations}, except we use pseudosplines \parencite{Hastie1996} as implemented in \texttt{gamsel} for a fair comparison. The number of spline terms is again four.

Figure \ref{fig:timings-gamsel} reports the aggregated results from 30 synthetic datasets. The vertical bars are averages and the error bars are standard errors.
\begin{figure}[ht]
\centering
\input{Figures/timings-gamsel.tex}
\caption{Comparisons of \texttt{grpsel} and \texttt{gamsel}, and their estimators. Metrics are aggregated over 30 synthetic datasets generated with $\operatorname{SNR}=1$, $\rho=0.5$, and $n=1,000$. Vertical bars represent averages and error bars denote (one) standard errors.}
\label{fig:timings-gamsel}
\end{figure}
The timings for \texttt{grpsel} are virtually identical to those reported in Section~\ref{sec:simulations}, though the splines are different here. \texttt{gamsel} is slower than \texttt{grpsel} for $p=10,000$ and $p=25,000$, but the gap is relatively smaller for $p=100,000$, perhaps thanks to the screening rules described in \textcite{Chouldechova2015}. \texttt{gamsel} does not report the number of iterations required for convergence, so we cannot compare with that metric here.

\section{Data analyses}

\subsection{Macroeconomic data preprocessing}
\label{app:macro}

All series are made stationary using standard transformations given in \textcite{McCracken2016}. Some series contain missing observations and outliers, which are also treated as missing. These missing values are imputed using the \texttt{na\textunderscore kalman} function of the \texttt{R} package \texttt{imputeTS}. An observation $x_i$ is treated as an outlier if $|x_i-Q_2|/(Q_3-Q_1)>4.5$ where $Q_1$, $Q_2$, and $Q_3$ are the respective quartiles of the data.

\end{appendices}

\end{document}